  \providecommand\BibTeX{{%
    \normalfont B\kern-0.5em{\scshape i\kern-0.25em b}\kern-0.8em\TeX}}}
\newcommand{\eqdef}{\mathrel{\overset{\makebox[0pt]{\mbox{\normalfont\tiny\sffamily def}}}{=}}}
\newcommand{\dcup}{\;\dot\cup\;}
\newcommand{\locset}{\mathit{Loc}}
\newcommand{\storelat}{\mathit{Store}}
\newcommand{\prop}{\mathit{Prop}}
\newcommand{\csp}{\mathit{CSP}}
\newcommand{\solve}{\mathit{solve}}
\newcommand{\ST}{\mathit{ST}}
\newcommand{\spt}{\mathit{st}}
\newcommand{\antiL}{\mathscr{A}(L)}
\newcommand{\nameset}{\mathit{Name}}
\newcommand{\rw}{\mathit{rw}}
\newcommand{\xrightarrowdbl}[2][]{%
  \xrightarrow[#1]{#2}\mathrel{\mkern-14mu}\rightarrow
}
\lstdefinelanguage{bonsai}{
  keywords = {when,let,pause,loop,par,space,||,<-,let,in,mut,fn,end,exit,trap,for,in,otherwise,by,single_space,single_time,world_line,transient,bot,top,pre,await,each,clock,then,if,run,up,universe,prune,flow,else,nothing,public,protected,private,class,interface,extends,implements,proc,new,static,ref,module,read,write,readwrite,weak,return,abort,suspend,not,or,and,with,stop,true,false,unknown,int,void,boolean}
}
\reservestyle{\bonsai}{\textbf}
\definecolor{ForestGreen}{rgb}{0.0, 0.65, 0.31}
\begin{document}
\mathligsoff

\title{Spacetime Programming}         
\subtitle{A Synchronous Language for Composable Search Strategies}  


\author{Pierre Talbot}
\orcid{0000-0001-9202-4541}             
\affiliation{
  \position{}
  \department{}             
  \institution{Universit\'e de Nantes}           
  \streetaddress{2, rue de la Houssini\`ere}
  \city{Nantes}
  \state{}
  \postcode{44000}
  \country{France}                   
}
\email{pierre.talbot@univ-nantes.fr}         

\begin{abstract}
Search strategies are crucial to efficiently solve constraint satisfaction problems.
However, programming search strategies in the existing constraint solvers is a daunting task and constraint-based languages usually have compositionality issues.
We propose spacetime programming, a paradigm extending the synchronous language \textsf{Esterel} and \textit{timed concurrent constraint programming} with backtracking, for creating and composing search strategies.
In this formalism, the search strategies are composed in the same way as we compose concurrent processes.
Our contributions include the design and behavioral semantics of spacetime programming, and the proofs that spacetime programs are deterministic, reactive and extensive functions.
Moreover, spacetime programming provides a bridge between the theoretical foundations of constraint-based concurrency and the practical aspects of constraint solving.
We developed a prototype of the compiler that produces search strategies with a small overhead compared to the hard-coded ones.
\end{abstract}

 \begin{CCSXML}
<ccs2012>
<concept>
<concept_id>10003752.10003753.10003761.10003764</concept_id>
<concept_desc>Theory of computation~Process calculi</concept_desc>
<concept_significance>500</concept_significance>
</concept>
<concept>
<concept_id>10003752.10003753.10003765</concept_id>
<concept_desc>Theory of computation~Timed and hybrid models</concept_desc>
<concept_significance>500</concept_significance>
</concept>
<concept>
<concept_id>10003752.10010124.10010131</concept_id>
<concept_desc>Theory of computation~Program semantics</concept_desc>
<concept_significance>300</concept_significance>
</concept>
<concept>
<concept_id>10010147.10010178.10010205.10010207</concept_id>
<concept_desc>Computing methodologies~Discrete space search</concept_desc>
<concept_significance>500</concept_significance>
</concept>
<concept>
<concept_id>10010147.10011777.10011014</concept_id>
<concept_desc>Computing methodologies~Concurrent programming languages</concept_desc>
<concept_significance>300</concept_significance>
</concept>
</ccs2012>
\end{CCSXML}

\ccsdesc[500]{Theory of computation~Process calculi}
\ccsdesc[500]{Theory of computation~Timed and hybrid models}
\ccsdesc[300]{Theory of computation~Program semantics}
\ccsdesc[500]{Computing methodologies~Discrete space search}
\ccsdesc[300]{Computing methodologies~Concurrent programming languages}

\keywords{synchronous programming, concurrent constraint programming, constraint satisfaction problem, search strategy}

\maketitle


\section{Introduction}

Constraint programming is a powerful paradigm to model problems in terms of constraints over variables.
This declarative paradigm solves many practical problems including scheduling, vehicle routing or biology problems~\citep{handbook-cp}, as well as more unusual problems such as in musical composition~\citep{truchet-constraint-2011}.
Constraint programming describes \textit{what} the problem is, whereas procedural approaches describe \textit{how} a problem is solved.
The programmer declares the constraints of its problem, and relies on a generic constraint solver to obtain a solution.

A constraint satisfaction problem (CSP) is a couple $\langle d, C \rangle$ where $d$ is a function mapping variables to sets of values (the domain) and $C$ is a set of constraints on these variables.
The goal is to find a solution: a set of singleton domains such that every constraint is satisfied.
For example, given the CSP $\langle \{x \mapsto \{1,2,3\}, y \mapsto \{1,2,3\}\}, \{x > y, x \neq 2\} \rangle$, a solution is $\{x \mapsto 3, y \mapsto 1\}$.

The solving procedure usually interleaves two steps: propagation and search.
Propagation removes values from the domains that do not satisfy at least one constraint.
The search step makes a choice when propagation cannot infer more information and backtracks to another choice if the former one did not lead to a solution.
The successive interleaving of choices and backtracks lead to the construction of a search tree that can be explored with various search strategies.
In this paper, the term ``search strategy'' takes the broad sense of any procedure that describes how a CSP is solved.

In order to attain reasonable efficiency, the programmer must often customize the search strategy \textit{per problem}~\citep{beck-solution-guided-2007,Simonis:2008:SSR:1431540.1431546,teppan-quickpup:-2012}.
However, to program a search strategy in a constraint solver is a daunting task that requires expertise and good understanding on the solver's intrinsics.
This is why various language abstractions emerged to ease the development of search strategies~\citep{van-hentenryck-constraint-chip-1989,salsa-2002,van-hentenryck-contraint-based-2005,martinez-search-2015}.

One of the remaining problems of search languages is the compositionality of search strategies: how can we easily combine two strategies and form a third one?
Compositionality is important to build a collection of search strategies reusable across problems.
To cope with this compositionality issue, we witness a growing number of proposals based on functional programming~\citep{mcp}, constraint logic programming~\citep{Schrijvers:2014:TMS:2608851.2608962}, and search combinators~\citep{search2013}.
However, a recurring issue in these approaches is the difficulty to share information among strategies; we discuss this drawback and others in Section~\ref{related-work}.

We propose \textit{spacetime programming} (or ``spacetime'' for short) to tackle this compositionality issue.
Spacetime is a language based on the imperative synchronous language \textsf{Esterel}~\cite{esterel} and timed concurrent constraint programming (TCC)~\citep{tcc-lics94,saraswat-tcc-2014}.
Spacetime extends the synchronous model of computation of \textsf{Esterel} with backtracking, and refines the interprocess communication mechanism of TCC with lattice-based variables.
We introduce these features in the following two paragraphs.

\paragraph{Synchronous Programming with Backtracking}
The synchronous paradigm~\cite{synchronous-Halbwachs} proposes a notion of logical time dividing the execution of a program into a sequence of discrete instants.
A synchronous program is composed of processes that wait for one another before the end of each instant.
Operationally, we can view a synchronous program as a coroutine: a function that can be called multiple times and that maintains its state between two successive calls.
One call to this coroutine represents one instant that elapsed.
The main goal of logical time is to coordinate concurrent processes while avoiding typical issues of parallelism, such as deadlock or indeterminism~\cite{lee-threads}.

Spacetime inherits most of the temporal statements of TCC, and more specifically those of the synchronous language \textsf{Esterel}~\cite{esterel}, including the delay, sequence, parallel, loop and conditional statements.
The novelty of spacetime is to connect the search tree generated by a CSP and linear logical time of synchronous programming.
Our proposal is captured in the following principle:
\begin{center}
\textit{A node of the search tree is explored in exactly one logical instant.}
\end{center}
\noindent
A corollary to this first principle is:
\begin{center}
\textit{A search strategy is a synchronous process.}
\end{center}
\noindent
These two principles are illustrated in Sections~\ref{spacetime-programming} and~\ref{advanced-strategies} with well-known search strategies.

\paragraph{Deterministic Interprocess Communication}
The second characteristic of spacetime is inherited from concurrent constraint programming (CCP)~\cite{saraswat-semantic-1991}.
CCP defines a shared memory as a global constraint store accumulating partial information.
The CCP processes communicate and synchronize through this constraint store with two primitives: $tell(c)$ for adding a constraint $c$ into the store, and $ask(c)$ for asking if $c$ can be deduced from the store.
Concurrency is treated by requiring the store to grow \textit{monotonically} and \textit{extensively}, which implies that removal of information is not permitted.
An important result is that any CCP program is a closure operator over its constraint store (a function that is idempotent, extensive and monotone).

TCC embeds CCP in the synchronous paradigm~\cite{tcc-lics94,saraswat-tcc-2014} such that an instant is guaranteed to be a closure operator over its store; however information can be lost between two instants.
There are two main differences between spacetime and TCC.

Firstly, instead of a central and shared constraint store, variables in spacetime are defined over lattice structures.
The tell and ask operations are thus defined on lattices, where \textit{tell} relies on the least upper bound operation and \textit{ask} on the order of the lattice.
In Section~\ref{lattice-csp}, we formalize a CSP as a lattice that we later manipulate as a variable in spacetime programs.

Secondly, unlike TCC programs, spacetime programs are not closure operators by construction.
This stems from the negative ask statement (testing the absence of information) which is not monotone, and the presence of external functions which are not necessarily idempotent and monotone.
As in \textsf{Esterel}, we focus instead on proving that the computation is deterministic and reactive.
In addition, we also prove that spacetime programs are extensive functions \textit{within} and \textit{across} instants (Section~\ref{proof-correctness}).

\paragraph{Contributions}

In summary, this paper includes the following contributions:

\begin{itemize}
  \item We provide a language tackling the compositionality issue of search strategies.
        We illustrate this claim in Sections~\ref{spacetime-programming} and~\ref{advanced-strategies} by reconstructing and combining well-known search strategies.
  \item We extend the behavioral semantics of \textsf{Esterel} to backtracking and variables defined over lattices with proofs of determinism, reactivity and extensiveness (Section~\ref{spacetime-semantics}).
  \item We implement a prototype of the compiler\footnote{Open source compiler available at \url{https://github.com/ptal/bonsai/tree/PPDP19}.}, and integrate spacetime into the \textsf{Java} language (Section~\ref{implementation}).
        The evaluation of the search strategies presented in this paper shows a small overhead compared to the hard-coded ones of \textsf{Choco}~\cite{choco}.
  \item Spacetime is the first language that unifies constraint-based concurrency, synchronous programming and backtracking.
        This unification bridges a gap between the theoretical foundations of CCP and the practical aspects of constraint solving.
\end{itemize}

\section{Definitions}
\label{definitions}

To keep this paper self-contained, we expose necessary definitions on lattice theory which are then used to define constraint programming.
Given an ordered set $\langle L, \leq \rangle$ and $S \subseteq L$, $x \in L$ is a \textit{lower bound} of $S$ if $\forall{y \in S},~x \leq y$.
We denote the set of all the lower bounds of $S$ by $S^\ell$.
The element $x \in L$ is the \textit{greatest lower bound} of $S$ if $\forall{y \in S^\ell},~x \geq y$.
The \textit{least upper bound} is defined dually by reversing the order.

\begin{definition}[Lattice]
An ordered set $\langle L, \leq \rangle$ is a lattice if every pair of elements $x,y \in L$ has both a least upper bound and a greatest lower bound.
We write $x \sqcup y$ (called join) the least upper bound of the set $\{x, y\}$ and $x \sqcap y$ (called meet) its greatest lower bound.
A \textit{bounded lattice} has a top element $\top \in L$ such that $\forall{x \in L},~x \leq \top$ and a bottom element $\bot \in L$ such that $\forall{x \in L},\bot \leq x$.
\end{definition}
\noindent
As a matter of convenience and when no ambiguity arises, we simply write $L$ instead of $\langle L, \leq \rangle$ when referring to ordered structures.
Also, we refer to the ordering of the lattice $L$ as $\leq_L$ and similarly for any operation defined on $L$.

An example is the lattice $\mathit{LMax}$ of increasing integers $\langle N, \geq~,\mathit{max} \rangle$ where $N \subset \mathbb{N}$, $\geq$ is the natural order on $\mathbb{N}$ and $\mathit{max}$ is the join operator.
Dually, we also have $\mathit{LMin}$ with the order $\leq$ and join $\mathit{min}$.

The Cartesian product $P \times Q$ is defined by the lattice $\langle \{(x,y) \;|\; x \in P, y \in Q\}, \leq_\times\rangle$ such that $(x_1,y_1) \leq_\times (x_2, y_2)$ if $x_1 \leq_{P} x_2 \land y_1 \leq_{Q} y_2$.
Given the lattice $L_1 \times L_2$, it is useful to define the following projection functions, for $i \in \{1,2\}$ and $x_i \in L_i$ we have $\pi_i((x_1,x_2)) \mapsto x_i$.
For the sake of readability, we also extend the projection over any subset $S \subseteq L_1 \times L_2$ as $\pi'_i(S) = \{\pi_i(x) \;|\; x \in S\}$.

Given a lattice $\langle L, \leq\rangle$, a function $f: L \rightarrow L$ is extensive if for all $x \in L$, we have $x \leq f(x)$.
This property is important in language semantics because it guarantees that a program does not lose information.
More background on lattice theory can be found in~\cite{birkhoff-lattice-1967,davey-introduction-2002}.

\section{Lattice View of Constraint Programming}
\label{lattice-csp}

As we will see shortly, a spacetime program is a function exploring a state space defined over a lattice structure.
To illustrate this paradigm, we choose in this paper to focus on the state space generated by constraint satisfaction problems (CSPs).
Hence, we describe the lattice of CSPs and the lattice of its state space, called a \textit{search tree}.

\subsection{Lattice of CSPs}

Following various works~\cite{apt-essence-1999,Fernandez:2004:ICS:963778.963779,pelleau-constraint-2013,scott-other-2016}, we introduce constraint programming through the prism of lattice theory.
The main observation is that the hierarchical structure of constraint programming can be defined by a series of lifts.
We incrementally construct the lattice of CSPs.

First of all, we define the domain of a variable as an element of a lattice structure.
In the case of finite domains, an example is the powerset lattice $\langle \mathcal{P}(N), \supseteq \rangle$ with the finite set $N \subset \mathbb{N}$ and ordered by superset inclusion.
For instance, a variable $x$ in $\{0,1,2\} \in \mathcal{P}(N)$ is less informative than a singleton domain $\{0\}$, i.e. $\{0,1,2\} \leq \{0\}$.
Other lattices can be used (see e.g.~\cite{Fernandez:2004:ICS:963778.963779}), so we abstract the lattice of variable's domains as $\langle D, \leq \rangle$.

Let $\locset$ be an unordered set of variable's names.
We lift the lattice of domains $D$ to the lattice of partial functions $\locset \rightharpoonup D$.
In operational terms, a partial function represents a store of variables.
\begin{definition}[Store of variables]
We write the set of all partial functions from $\locset$ to $D$ as $[\locset \rightharpoonup D]$.
Let $\sigma, \tau \in [\locset \rightharpoonup D]$.
We write $\pi'_1(\sigma)$ the subset of $\locset$ on which $\sigma$ is defined.
The set of variables stores is a lattice defined as:
\begin{displaymath}
\mathit{SV}=\langle [\locset \rightharpoonup D], \tau \leq \sigma \text{ if } \forall{\ell \in \pi'_1(\tau)},~\tau(\ell) \leq_{D} \sigma(\ell) \rangle
\end{displaymath}
\label{var-store}
\end{definition}
\noindent
We find convenient to turn a partial function $\sigma$ into a set, called its graph, defined by $\{(x,\sigma(x))\;|\;x \in \pi'_1(\sigma)\}$.
Given a lattice $L$, the lattice $\storelat(\locset, L)$ is the set of the graphs of all partial functions from $\locset$ to $L$.
In comparison to $\mathit{SV}$, we parametrize the lattice $\storelat(\locset,L)$ by its set of locations $\locset$ and underlying lattice $L$, so we can reuse it later.
Notice that $\storelat(\locset,D)$ is isomorphic to $\mathit{SV}$.

We turn a logical constraint $c \in C$ into an extensive function $p: SV \to SV$, called \textit{propagator}, over the store of variables.
For example, given the store $d=\{x \mapsto \{1,2\}, y \mapsto \{2,3\}\}$ and the constraint $x \geq y$, a propagator $p_{\geq}$ associated to $\geq$ gives $p_{\geq}(d) = \{x \mapsto \{2\}, y \mapsto \{2,3\}\}$.
We notice that this propagation step is extensive, e.g. $d \leq p_{\geq}(d)$.
Beyond extensiveness, a propagator must also be sound, i.e. it does not remove solutions of the induced constraint, to guarantee the correctness of the solving algorithm.

We now define the lattice of all propagators $\mathit{SC}=\langle \mathcal{P}(\prop), \subseteq~\rangle$ where $\prop$ is the set of all propagators (extensive and sound functions).
The order is given by set inclusion: additional propagators bring more information to the CSP.
We call an element of this lattice a \textit{constraint store}.
The lattice of all CSPs---with propagators instead of logical constraints---is given by the Cartesian product $\csp=SV \times SC$.

Given a CSP $\langle d, \{p_1,\ldots,p_n\} \rangle \in \csp$, the \textit{propagation step} is realized by computing the fixpoint of $p_1(p_2(..p_n(d)))$.
We note $propagate: \csp \to \csp$ the function computing this fixpoint.
In practice, this function is one crucial ingredient to obtain good performance, and this is part of the theory of constraint propagation (e.g. see~\cite{apt-essence-1999,schulte-efficient-2008,propagation-guido-tack}).
In the rest of this paper, we keep this propagation step abstract, and we delegate it to specialized solvers when needed.

Once propagation is at a fixpoint, and if the domain $d$ is not a solution yet, a search step must be performed.
Search consists in splitting the state space with a branching function $\mathit{branch}: \csp \to \storelat(\mathbb{N}, \csp)$ and exploring successively the sub-problems created.
We call an element of the lattice $\storelat(\mathbb{N}, \csp)$ the \textit{branches}.
The indices of the branches serve to order the child nodes.
For instance, a standard branching function consists in selecting the first non-instantiated variable and to divide its domain into two halves---one explored in each sub-problem.
If the branching strategy is strictly extensive ($x < f(x)$) over each branch $b_i \in \mathit{branch}(\langle d, P\rangle)$, and does not add variables into $d$, then this solving procedure is guaranteed to terminate on finite domains.
This solving algorithm is called \textit{propagate and search}.

\subsection{Lattice of Search Trees}

A novel aspect of this lattice framework is to view the search tree as a lattice as well.
It relies on the \textit{antichain completion} which derives a lattice to the antichain subsets of its powerset.\footnote{In the finite case, the antichain completion of a lattice $L$ is isomorphic to the set of ideals of $L$ as shown by \citet{crampton-completion-2001}. We prefer the antichain formulation because it is closer to the data structure of a queue.}

\begin{definition}[Antichain completion]
The antichain completion of a lattice $L$, written $\antiL$, is a lattice defined as:
\begin{displaymath}
\begin{array}{l}
\antiL = \langle \{ S \subseteq \mathcal{P}(L) \;|\; \forall{x,y\in S},~x \leq y \implies x = y\}, \\
\phantom{\antiL = \langle } S \leq Q \text{ if } \forall{y \in Q},~\exists{x \in S},~x \leq_{L} y \rangle
\end{array}
\end{displaymath}
It is equipped with the Smyth order~\cite{smyth-power-1978}.
\end{definition}

The lattice of the search trees is defined as $\ST=\mathscr{A}(CSP)$.
Intuitively, an element $q \in \ST$ represents the frontier of the search tree being explored.
The antichain completion accurately models the fact that parents' nodes are not stored in $q$.
Operationally, we view $q$ as a queue of nodes\footnote{Despite the name, this terminology of ``queue'' does not imply a particular queueing strategy, i.e. the order in which the nodes are explored.}, which is central to backtracking algorithms.

The missing piece to build and explore the CSP state space is the \textit{queueing strategy} which allows us to pop and push nodes onto the queue.

\begin{definition}[Queueing strategy]
Let $L$ be a lattice and $\antiL$ be its antichain completion.
The pair of functions
\begin{displaymath}
\begin{array}{l}
\mathit{pop}: \antiL \to \antiL \times L\\
\mathit{push}: \antiL \times \storelat(\mathbb{N}, L) \rightarrow \antiL\\
\end{array}
\end{displaymath}
is a queueing strategy if, for any extensive function $f: \antiL \times L \rightarrow \antiL \times \storelat(\mathbb{N}, L)$, the function composition $\mathit{push} \circ f \circ \mathit{pop}$ is extensive over $\antiL$.
\label{queueing-property}
\end{definition}
\noindent
In the context of CSP solving, we have $L = \csp$ and $\antiL = \ST$.
As examples of queueing strategies, we have depth-first search (DFS), breadth-first search (BFS) and best-first search.

The state space of a CSP $\langle d, P \rangle$ is explored by computing the fixpoint of the function $\solve(\{\langle d, P \rangle\})$ which is defined as:
\begin{displaymath}
\begin{array}{l}
\solve: \ST \rightarrow \ST \\
\solve = \mathit{push} \circ (\mathit{id} \times (\mathit{branch} \circ \mathit{propagate})) \circ \mathit{pop}
\end{array}
\end{displaymath}
\noindent
This function formalizes the usual steps when solving a constraint problem: pop a node from the queue, propagate it, divide it into several sub-problems, and push these sub-problems onto the queue.
The output type of each function matches the input type of the next one---notice that we use the identity function $id$ to avoid passing the search tree to $\mathit{propagate}$ and $\mathit{branch}$.
Reaching a fixpoint on $\solve$ means that we explored the full search tree, and explored all solutions if there is any.

\subsection{The Issue of Compositionality}

The $\solve$ function is parametrized by a branching and queuing strategies.
However, this does not suffice to program every search strategy.
For example, the depth-bounded search strategy---further developed in the next section---consists in exploring the search tree until a given depth is reached.
To program this strategy in the current framework, we must extend the definition of a CSP with a depth counter defined over $\mathit{LMax}$ (given in Section~\ref{definitions}).
The resulting search tree is defined as $\ST_2 = \mathscr{A}(\csp \times \mathit{LMax})$.
We also extend $\solve$ with two functions: $\mathit{inc}$ for increasing the counter of the child nodes, and $\mathit{prune}$ for pruning the nodes at the given depth:
\begin{displaymath}
\begin{array}{l}
\mathit{solve2}: \ST_2\rightarrow \ST_2\\
\mathit{solve2} = \mathit{push} \circ (\mathit{id} \times (\mathit{inc} \circ \mathit{prune} \circ \mathit{branch} \circ \mathit{propagate} )) \circ \mathit{pop}
\end{array}
\end{displaymath}
Although orthogonal to the depth counter, the types of the $\mathit{propagate}$ and $\mathit{branch}$ functions must be modified to work over $\csp \times \mathit{LMax}$.
Another solution would be to project elements of $\csp \times \mathit{LMax}$ with additional $id$ functions.
A more elaborated version of this idea, relying on monads to encapsulate data, is investigated in \textit{monadic constraint programming}~\cite{mcp}.
The search strategies defined in this framework require the users to have substantial knowledge in functional language theory.
Similarly, constraint solving libraries are made extensible through software engineering techniques such as design patterns.
In all cases, a drawback is that it complicates the code base, which is hard to understand and extend with new search strategies.
Moreover, such software architecture varies substantially across solvers.

The problem is that we need to either \textit{modify existing structures} or integrate the strategies into some predefined software architecture in order to program new search strategies.
We call this problem the \textit{compositionality issue}.
Our proposal is to rely on \textit{language abstractions} instead of software abstractions to program search strategies.

\section{Language Overview}
\label{spacetime-programming}

We give a tour of the spacetime model of computation and syntax by incrementally building the iterative-deepening search strategy~\cite{Korf85depth-firstiterative-deepening}.
A key insight is that this search strategy is developed generically with regard to the state space.

\subsection{Model of Computation}

The model of computation of spacetime is inspired by those of (timed) concurrent constraint programming (CCP) and \textsf{Esterel}.

\paragraph{CCP model of computation}
\label{ccp}

We view the structure of a CCP program as a lattice $\langle L, \vDash, \sqcup \rangle$ where $\vDash$ is called the \textit{entailment}.
The entailment is the order of the lattice defined as $a \vDash b \equiv b \leq a$.
Following Scott's information systems~\cite{scott-domains-1982}, CCP views the bottom element $\bot$ as the lack of information, the top element $\top$ as all the information, the tell operator $x \sqcup y$ as the join of the information in $x$ and $y$, and the ask operator $x \vDash y$ as an expression that is true if we can deduce $y$ from $x$.

CCP processes communicate through this lattice by querying for information with the entailment, or adding information with join.
For example, consider the following definitions of \texttt{prune} and \texttt{inc}:
\begin{lstlisting}[mathescape,language=bonsai]
(when depth $\vDash$ 4 then $\text{``prune the subtree''}$) || (depth = depth $\sqcup$ (depth + 1))
\end{lstlisting}
\noindent
with $||$ the parallel composition.
The first process is suspended on \texttt{depth $\vDash$ 4} until \texttt{depth} becomes greater than or equal to $4$.
Hence, the second process is completed first if we initially have \texttt{depth < 4}.
The limitation of CCP is that it is not possible to write a process for the statement ``prune the subtree''.
This is because a CCP process computes over a fixed lattice, such as $\csp$, but it is not possible to compute over its antichain completion, which is necessary for creating and exploring its state space.

\paragraph{Space component of spacetime}

The approach envisioned with the spacetime paradigm is to view a search algorithm as a set of concurrent processes exploring collaboratively a state space.
In this model, we rewrite $solve2$ as a parallel composition of processes as follows (the arrows indicate read/write operations):
\begin{center}
\begin{tikzpicture}
\node at (3,1) {$\mathit{depth} \in \mathit{LMax}$};
\node at (5.5,1) {$\langle d, P\rangle \in \csp$};
\node[right] at (0,0) {$\mathit{solve2} = \mathit{push} \circ (\mathit{inc} \<hpar> \mathit{prune} \<hpar> \mathit{branch} \<hpar> \mathit{propagate}) \circ \mathit{pop}$};
\node at (4.7, -1) {$\mathit{branches} \in \storelat(\mathbb{N},\csp \times \mathit{LMax})$};


\draw[<->] (2.5,0.3) -- (2.5,0.75);
\draw[<->] (3.5,0.3) -- (2.7,0.75);

\draw[<->] (5,0.3) -- (5,0.75);
\draw[<->] (6,0.3) -- (5.2,0.75);

\draw[<->] (3.5,-0.3) -- (3.5,-0.75);
\draw[<->] (4.5,-0.3) -- (3.7,-0.75);

\end{tikzpicture}
\end{center}
\noindent
Firstly, we pop a node from the queue which contains the variables $depth$ and $\langle d, P \rangle$.
Then, similarly to CCP, the processes communicate by reading and writing into these variables.
The Cartesian product of the variables, called the \textit{space} of the program, is automatically synthesised by the spacetime semantics.
This is reflected in the type $\csp \times \mathit{LMax}$ of $\mathit{branches}$.
The processes only manipulate $\mathit{branches}$ through dedicated statements, namely \texttt{space} and \texttt{prune} (that we introduce below).

\paragraph{Time component of spacetime}

One remaining question is how to synchronize processes so that every process waits for each other before the next node is popped?
Our proposal is to rely on the notion of synchronous time of \textsf{Esterel}.
During each instant, a process is executed until it encounters a special statement called \texttt{pause}.\footnote{To ensure cooperative behavior among processes, the amount of work to perform during an instant must be bounded in time.}
Once \texttt{pause} is reached, the process waits for all other processes to be paused or terminated.
The next instant is then started.

The novelty in spacetime is to connect the passing of time to the expansion of the search tree.
Concretely, an instant consists in performing three consecutive steps: pop a node, execute the processes until they are all paused, and push the resulting branches onto the queue.
We repeat these steps until the queue is empty or all processes are terminated.

We now detail this model of computation through several examples, notably by programming the $inc$ and $prune$ processes.
We delay the presentation of $propagate$ and $branch$ to Section~\ref{advanced-strategies}.


\subsection{Binary Search Tree}
\label{binary-tree}

A spacetime program is a set of Java classes augmented with \textit{spacetime class fields} (prefixed by the \texttt{single_space}, \texttt{world_line} or \texttt{single_time} keywords) and \textit{processes} (prefixed by \texttt{proc} or \texttt{flow} keywords).
The type of a spacetime field or local variable is a Java class that implements a lattice interface providing the entailment and join operators.
A process does not return a value; it acts as a coroutine mutating the spacetime variables in each instant.
In contrast, Java method calls are viewed as atomic operations in a spacetime process.

One of the simplest process in spacetime is to generate an infinite binary search tree:
\begin{lstlisting}[language=bonsai]
class Tree {
  public proc binary =
    loop
      space nothing end;
      space nothing end;
      pause;
    end }
\end{lstlisting}
\noindent
This process generates a binary tree in which every node is empty; we will decorate these nodes with data later.
A branch is created with the statement \texttt{space $p$ end} where the process $p$ describes the differences between the current node and the child node.
In the example, the difference is given by \texttt{nothing} which is the \textit{empty process} terminating immediately without effect, thus all generated nodes will be the same.

In each instant, four actions are realized (we connect these actions to the model of computation in parenthesis):
\begin{enumerate}
\item A node is popped from the queue (function $\mathit{pop}$).
\item The process is executed until we reach a \texttt{pause} statement (process between $\mathit{pop}$ and $\mathit{push}$).
\item We retrieve the sequence of branches, duplicate the backtrackable state\footnote{The backtrackable state is the Cartesian product of the variables prefixed by \texttt{world_line} (see below).} for each \texttt{space $p$ end} statement, and execute each $p$ on a distinct copy of the state to obtain the child nodes (writing into the variable $\mathit{branches}$).
\item The child nodes are pushed onto the queue (function $\mathit{push}$).
\end{enumerate}
These actions are repeated in the statement \texttt{loop}.
Since the process \texttt{binary} never terminates and the queue is never empty, the state space generated is infinite.
In summary, a process generates a sequence of branches during an instant, and a search tree across instants.

Now, we illustrate the use of spacetime variables by introducing a node and depth counters:
\begin{lstlisting}[language=bonsai]
class Node {
  public single_space LMax node = new LMax(0);
  public flow count = readwrite node.inc() }
class Depth {
  public world_line LMax depth = new LMax(-1);
  public flow count = readwrite depth.inc() }
\end{lstlisting}
\noindent
A \textit{flow process} executes its body $p$ in each instant, the keyword \texttt{flow} is a syntactic sugar for \texttt{loop $p$; pause end}.
Both classes work similarly: we increase by one their counters in each instant with the method \texttt{inc} on \texttt{LMax}.
We discuss two kinds of annotations appearing in these examples: \textit{read/write annotations} and \textit{spacetime annotations}.

Read/write annotations indicate how a variable is manipulated inside a host function.
It comes in three flavors: \texttt{read $x$} indicates that $x$ is only read by the function, \texttt{write $x$} that the function only writes more information in $x$ without reading it, and \texttt{readwrite $x$} that the value written in $x$ depends on the initial value of $x$.
Every write in $x$ must respect its lattice order and this verification is left to the programmer of the lattice.
For example, the method \texttt{x.inc()} is defined as $x = x + 1$, and thus $x$ must be annotated by \texttt{readwrite}.
These attributes are essential to ensure determinism when variables are shared among processes, and for correctly scheduling processes.

\begin{figure*}[t!]
\begin{center}
\begin{tikzpicture}
\tikzstyle{time}=[draw,shape=circle,circle,fill,inner sep=1pt]
\tikzstyle{current}=[draw,shape=circle,circle,fill,inner sep=2pt]
\tikzstyle{explore}=[draw,shape=circle,circle,fill=white,inner sep=1.5pt]

\begin{scope}[yshift=4.2cm]
\node[anchor=south] at (3.75,0) {$t_1$};
\node[anchor=south] at (5.65,0) {$t_2$};
\node[anchor=south] at (7.9,0) {$t_3$};
\node[anchor=south] at (11.1, 0) {$t_6$};
\node[anchor=south] at (14.1325, 0) {$t_7$};
\end{scope}

\begin{scope}[xshift=0cm]

  \draw[rounded corners] (3.2,4.2) rectangle (4.2,2.8);

  \draw[thick] (4.2,3.5) node{} -- (4.5,3.5) node[time]{} -- (4.8,3.5);
  \draw[thick] (3,3.5) node[time]{} -- (3.2,3.5);

  \begin{scope}[xshift=0.7cm]
  \draw[thick] (3,3.7) node[current]{} -- (2.7,3.3) node[explore]{};
  \draw[thick] (3,3.7) node[current]{} -- (3.3,3.3) node[explore]{};
  \end{scope}
\begin{scope}[xshift=2.3cm]
  \draw[rounded corners] (2.5,4.2) rectangle (4.2,2.8);

  \draw[thick] (4.2,3.5) node{} -- (4.5,3.5) node[time]{} -- (4.8,3.5);

  \begin{scope}[xshift=0.5cm]

  \draw[thick] (3,4) node[time]{} -- (2.6,3.5) node[current]{};
  \draw[thick] (3,4) node[time]{} -- (3.4,3.5) node[explore]{};

  \draw[thick] (2.6,3.5) node[time]{} -- (2.3,3) node[explore]{};
  \draw[thick] (2.6,3.5) node[time]{} -- (2.9,3) node[explore]{};
  \end{scope}

\begin{scope}[xshift=2.3cm]
  \draw[rounded corners] (2.5,4.2) rectangle (4.2,2.8);
  \draw[thick] (4.2,3.5) -- (4.5,3.5) node[time]{};
  \draw[thick,densely dotted] (4.5,3.5) -- (5,3.5);
  \draw[thick] (5,3.5) node[time]{} -- (5.3,3.5);

  \begin{scope}[xshift=0.5cm]

  \draw[thick] (3,4) node[time]{} -- (2.6,3.5) node[time]{};
  \draw[thick] (3,4) node[time]{} -- (3.4,3.5) node[explore]{};

  \draw[thick] (2.6,3.5) node[time]{} -- (2.3,3) node[current]{};
  \draw[thick] (2.6,3.5) node[time]{} -- (2.9,3) node[explore]{};
  \end{scope}

\begin{scope}[xshift=3.5cm]
  \draw[rounded corners] (1.8,4.2) rectangle (4.2,2.8);
  \draw[thick] (4.2,3.5) node{} -- (4.5,3.5) node[time]{} -- (4.8,3.5);

  \draw[thick] (3,4) node[time]{} -- (2.4,3.5) node[time]{};
  \draw[thick] (3,4) node[time]{} -- (3.6,3.5) node[time]{};

  \draw[thick] (2.4,3.5) node[time]{} -- (2.7,3) node[time]{};
  \draw[thick] (2.4,3.5) node[time]{} -- (2.1,3) node[time]{};

  \draw[thick] (3.6,3.5) node[time]{} -- (3.3,3) node[current]{};
  \draw[thick] (3.6,3.5) node[time]{} -- (3.9,3) node[explore]{};

\begin{scope}[xshift=3cm]
  \draw[thick] (1.5,3.5) node[time]{} -- (1.8,3.5);
  \draw[rounded corners] (1.8,4.2) rectangle (4.2,2.8);

  \draw[thick] (3,4) node[time]{} -- (2.4,3.5) node[time]{};
  \draw[thick] (3,4) node[time]{} -- (3.6,3.5) node[time]{};

  \draw[thick] (2.4,3.5) node[time]{} -- (2.7,3) node[time]{};
  \draw[thick] (2.4,3.5) node[time]{} -- (2.1,3) node[time]{};

  \draw[thick] (3.6,3.5) node[time]{} -- (3.3,3) node[time]{};
  \draw[thick] (3.6,3.5) node[time]{} -- (3.9,3) node[current]{};
\end{scope}
\end{scope}
\end{scope}
\end{scope}
\end{scope}
\end{tikzpicture}
\end{center}
\caption{Progression of bounded depth search in each instant with maximum depth equals to $2$.}
\label{bds-fig}
\end{figure*}
Spacetime annotations indicate how a variable evolves in memory through time.
For this purpose, a spacetime program has three distinct memories in which the variables can be stored:

\begin{enumerate}
\item[(i)] Global memory (keyword \texttt{single_space}) for variables evolving globally to the search tree.
A \texttt{single_space} variable has a unique location in memory throughout the execution.
For example, the counter \texttt{node} is a \texttt{single_space} variable: since we explore one node in every instant, we increase its value by one in each instant.
\item[(ii)] Backtrackable memory (keyword \texttt{world_line}) for variables local to a path in the search tree.
The queue of nodes is the backtrackable memory.
For example, the value of the counter \texttt{depth} must be restored on backtrack in the search tree.
\item[(iii)] Local memory (keyword \texttt{single_time}) for variables local to an instant and reallocated in each node.
A \texttt{single_time} variable only exists in one instant.
We will encounter this last annotation later on.
\end{enumerate}

Another feature of interest is the support of \textit{modular programming} by assembling processes defined in different classes.
As an example, we combine \texttt{Tree.binary} and \texttt{Depth.count} with the parallel statement:
\begin{lstlisting}[language=bonsai,mathescape]
public proc binary_stats =
  module Tree generator = new Tree();
  module Depth depth = new Depth();
  par run generator.binary()$\texttt{ || }$ run depth.count() end
end
\end{lstlisting}
\noindent
The variables \texttt{generator} and \texttt{depth} are annotated with \texttt{module} to distinguish them from spacetime variables.
We use the keyword \texttt{run} to disambiguate between process calls and method calls.

Last but not least, the \textit{disjunctive parallel} statement \texttt{par $p$ || $q$ end} executes two processes in lockstep.
It terminates once \textit{both} processes have terminated.
Dually, we have the \textit{conjunctive parallel} statement \texttt{par $p$ <> $q$ end} which terminates
\begin{enumerate*}
\item[(i)] in the next instant if one of $p$ or $q$ terminates, or
\item[(ii)] in the current instant if both $p$ and $q$ terminate.
\end{enumerate*}
The condition (i) implements a form of \textit{weak preemption}.
An instant terminates once every process is paused or terminated.
In this respect, \texttt{pause} can be seen as a synchronization barrier among processes.

\subsection{Depth-bounded Search}
\label{ids}

Now we are ready to program a search strategy in spacetime.
We consider the strategy \texttt{BoundedDepth} which bounds the exploration of the search tree to a depth limit:
\begin{lstlisting}[language=bonsai]
public class BoundedDepth {
  single_space LMax limit;
  public BoundedDepth(LMax limit) { this.limit = limit; }
  public proc bound_depth =
    module Depth counter = new Depth();
    par
    <> run counter.count()
    <> flow
        when counter.depth |= limit then prune end
       end
    end
  end }
\end{lstlisting}
\noindent
Whenever \texttt{depth} is greater than or equal to \texttt{limit} we prune the remaining search subtree.
The construction of the search tree through time is illustrated in Figure~\ref{bds-fig} with \texttt{limit} set at $2$.
The black dots are the nodes already visited, the large one is the one currently being visited and the white ones are those pushed onto the queue.

The disjunctive parallel composes two search trees by union, whereas the conjunctive parallel composes them by intersection.
For example, if we have \texttt{binary() || bound_depth()}, the search tree obtained is exactly the one of \texttt{binary()}, while \texttt{binary() <> bound_depth()} prunes the search tree at some depth limit.
Over two branches, the statement \texttt{prune || space $p$} creates a single branch \texttt{space $p$}, while \texttt{prune <> space $p$} creates a pruned branch.
This is made clear in Section~\ref{search-semantics} where we formalize these composition rules.

\subsection{A Glimpse of the Runtime}
\label{peek-runtime}

The class \texttt{Tree} is processed by the spacetime compiler which compiles every process into a regular Java method.
For example, the process \texttt{binary} is compiled into the following Java method:
\begin{lstlisting}[language=java]
public Statement binary() {
  return new Loop(
    new Sequence(Arrays.asList(
      new SpaceStmt(new Nothing()),
      new SpaceStmt(new Nothing()),
      new Delay(CompletionCode.PAUSE)))); }
\end{lstlisting}
\noindent
The compiled method returns the abstract syntax tree (AST) of the process.
This AST is then interpreted by the runtime engine \texttt{SpaceMachine}:
\begin{lstlisting}[mathescape,language=java]
public static void main(String[] args) {
  Tree tree = new Tree();
  StackLR queue = new StackLR();
  SpaceMachine machine = new SpaceMachine(tree.binary(), queue);
  machine.execute(); }
\end{lstlisting}
\noindent
We parametrize the runtime engine by the queue \texttt{StackLR}: a traditional stack exploring the tree in depth-first search from left to right.
Importantly, it means that the spacetime program is generic with regard to the queueing strategy.
The method \texttt{execute} returns either when the spacetime program terminates, the queue becomes empty or we reach a \texttt{stop} statement.
This latest statement offers a way to stop and resume a spacetime program outside of the spacetime world, which is handy for interacting with the external world.
In contrast, a \texttt{pause} statement is resumed automatically by the runtime engine as long as the queue is not empty.

Being aware of the runtime mechanism is helpful to extend \texttt{BoundedDepth} to the restart-based strategy \textit{iterative depth-first search} (IDS)~\cite{Korf85depth-firstiterative-deepening}.
IDS successively restarts the exploration of the same search tree by increasing the depth limit.
This strategy combines the advantages of breadth-first search (diversifying the search) and depth-first search (weak memory consumption).
Assuming we have a class \texttt{BoundedTree} combining \texttt{BoundedDepth} and \texttt{Tree}, we program IDS in the host language as follows:
\begin{lstlisting}[language=java]
public static void main(String[] args) {
  for(int limit=0; limit < max_depth(); limit++) {
    BoundedTree tree = new BoundedTree(new LMax(limit));
    StackLR queue = new StackLR();
    SpaceMachine machine = new SpaceMachine(tree.search(), queue);
    machine.execute(); }}
\end{lstlisting}
\noindent
We introduce additional examples of search strategies in Section~\ref{advanced-strategies}, and show how to combine two restart-based strategies in spacetime.

\section{Semantics of Spacetime}
\label{spacetime-semantics}

We develop the semantics of spacetime independently from the host language (\textsf{Java} in the previous section).
To achieve that, we suppose the program is flattened: every module definition and process call are inlined, and no recursion is allowed in processes.
We obtain a lighter abstract syntax of the spacetime statements formalized as follows ($p,q$ are processes, $x,y$ are identifiers, and $T$ is a host type):

\setlength{\grammarindent}{4em}
\begin{grammar}
<p, q> ::= \texttt{$T$ $x^{\rightarrow|\circlearrowleft|\downarrow}$} $\;$|$\;$ \texttt{when \texttt{$x$ |= $y$} then $p$ else $q$} 
\alt $f$($x_1^{w|r|\rw},\ldots,x_n^{w|r|\rw}$)
\alt \texttt{nothing} | \texttt{pause} | \texttt{stop} | \texttt{loop $p$} | $p \<seq> q$ | $p \<hpar> q$ | $p \<dia> q$ 
\alt \texttt{space} $p$ | \texttt{prune} 
\end{grammar}
\noindent
Spacetime annotations are shorten as follows: $\rightarrow$ stands for \texttt{single_space}, $\circlearrowleft$ for \texttt{single_time} and $\downarrow$ for \texttt{world_line}.\footnote{These symbols reflect how the variables evolve in the search tree. For example, $\downarrow$ depicts an evolution from the root to a leaf of the tree along a path.}
Read/write annotations are given by $w$ for \texttt{write}, $r$ for \texttt{read} and $\rw$ for \texttt{readwrite}.
Without loss of generality, we encapsulate the interactions between spacetime and its host language in function calls.


\subsection{Behavioral Semantics}
\label{behavioral-semantics}

The semantics of spacetime is inspired by the logical behavioral semantics of \textsf{Esterel}, a big-step semantics, as defined in~\cite{berry-constructive-2002,esterel-compilation}.
The semantic rules of spacetime defining the control flow of processes (for example \texttt{loop} or \texttt{pause}) are similar to those in \textsf{Esterel}.
We adapt these rules to match the two novel aspects of spacetime:
\begin{enumerate}
\item[(i)] Storing lattice-based variables in one of the three memories (instead of Esterel's Boolean signals).
\item[(ii)] Defining a structure to collect and compose the (pruned) branches created during an instant.
\end{enumerate}
The rules proper to spacetime are specific to either (i) or (ii).

Given the set of outputs produced by a program, a derivation in the behavioral semantics is a proof that a program transition is valid.
The behavioral transition rule is given as:
\begin{displaymath}
Q, L \vdash p \xrightarrow[I \sqcup O]{O'} p'
\end{displaymath}
\noindent
where the program $p$ is rewritten into the program $p'$ under
\begin{enumerate*}
\item[(i)] the queue $Q$ equipped with a queueing strategy $(pop, push)$,
\item[(ii)] the set of locations $L \subset Loc$ providing a unique identifier to every declaration of variable,
\item[(iii)] the input $I$, and
\item[(iv)] the outputs $O$ and $O'$.
\end{enumerate*}
We denote the set of syntactic variable names (as appearing in the source code) with $\nameset$, such that $\nameset \cap Loc = \emptyset$.
We write $L \dcup \{\ell\}$ the disjoint union, which is useful to extract a fresh location $\ell$ from $L$.

The goal of behavioral semantics is not to compute an output $O$ but to prove that a transition is valid if we already know $O$.
We obtain a valid derivation if the output $O'$ derived by the semantics is equal to the provided output $O$.
Conceptually, the behavioral semantics allows processes to instantaneously broadcast information.
In the following, we call the input and output structures \textit{universe} and we write $U'$ for the output $O'$, and $U=I\sqcup O$ for the input/output provided.

\subsubsection*{Space structure}
\label{space-structure}

The variable environment of a program, called its \textit{space}, stores the spacetime variables.
The spacetime annotations are given by the set $\mathit{spacetime} = \{\rightarrow,\circlearrowleft,\downarrow\}$.
The set of values of a variable is given by its type in the host language, which must be a lattice structure.
From the spacetime perspective, we erase the types in the set $\mathit{Value}$ which is the disjoint union of all types, and we delegate typing issues to the host language.
Putting all the pieces together, the set of spacetime variables $\mathit{Var}$ is the poset $\{\top\} \cup (\mathit{spacetime} \times \mathit{Value})$.
We need a distinct top element $\top$ for representing variables that are merged with a different spacetime or type---this can be checked at compile-time.

Given a set of locations $\mathit{Loc}$, the lattice of the spaces of the program is defined as $\mathit{Space} = \mathit{Store}(\mathit{Loc}, \mathit{Var})$.
The element $\bot$ is the empty space.
Given a space $S \in \mathit{Space}$, we define the subsets of the single space variables with $S^\rightarrow$, the single time variables with $S^\circlearrowleft$ and the world line variables with $S^\downarrow$.
In addition, given a variable $(st, v) \in S(\ell)$ at location $\ell$, we define the projections $S^{\spt}(\ell) = \spt$ and $S^{V}(\ell) = v$ to respectively extract the spacetime and the value of the variable.
$S^{V}(\ell)$ maps to $\bot$ if $\ell$ is undefined in $S$.

\subsubsection*{Universe structure}
\label{universe}

A universe incorporates all the information produced during an instant including the space, the completion code and the sequence of branches.
The completion code models the state of a process at the end of an instant: normally terminated (code $0$), paused in the current instant with \texttt{pause} (code $1$) or stopped in the user environment with \texttt{stop} (code $2$).
We denote the set of completion codes with $\mathit{Compl} = \langle \{0,1,2\}, \leq_{\mathbb{N}}\rangle$.
We describe the sequence of branches $B^{*}$ in the next section.
The universe structure is defined as follows:
\begin{displaymath}
  \mathit{Universe} = \mathit{Space} \times \mathit{Compl} \times B^{*}
\end{displaymath}
Given $U \in \mathit{Universe}$, we define the projections $U^S$, $U^k$ and $U^B$ respectively mapping to the space, completion code and the sequence of branches.
We also write $U^V$ instead of $U^{S^V}$, $U^\rightarrow$ instead of $U^{S^\rightarrow}$ and similarly for $\circlearrowleft$ and $\downarrow$.

\newcommand{\sizerules}{\small}
\begin{figure*}
\begin{mathpar}
  \sizerules

\inferrule[\sizerules nothing]
{}
{Q, \{\} \vdash \texttt{nothing} \xrightarrow[U]{\bot,\,0\,\langle \rangle} \texttt{nothing}}

\inferrule[\sizerules pause]
{}
{Q, \{\} \vdash \texttt{pause} \xrightarrow[U]{\bot,\,1,\,\langle \rangle} \texttt{nothing}}

\inferrule[\sizerules stop]
{}
{Q, \{\} \vdash \texttt{stop} \xrightarrow[U]{\bot,\,2,\,\langle \rangle} \texttt{nothing}}

\inferrule[\sizerules hcall]
{f(\ell_1^{a_1},\ldots,\ell_n^{a_n}) \xrightarrowdbl[\mathit{host}(U^S)]{H'} v}
{Q, \{\} \vdash f(\ell_1^{a_1},\ldots,\ell_n^{a_n}) \;\xrightarrow[U]{(\mathit{space}(H'), 0, \langle \rangle)} \texttt{nothing}}

\inferrule[\sizerules loop]
{Q, L \vdash p \;\xrightarrow[U]{U'}\; p' \\ U^k \neq 0}
{Q, L \vdash \texttt{loop $p$} \xrightarrow[U]{U'}\; \texttt{$p'$ ; loop $p$}}

\inferrule[\sizerules when-true]
{U^V(\ell_1) \vDash U^V(\ell_2) \twoheadrightarrow \mathit{true} \\ Q, L \vdash p \;\xrightarrow[U]{U'}\; p'}
{Q, L \vdash \texttt{when $\ell_1$ |= $\ell_2$ then $p$ else $q$} \xrightarrow[U]{U'}\; p'}

\inferrule[\sizerules when-false]
{U^V(\ell_1) \vDash U^V(\ell_2) \twoheadrightarrow v \quad v = \mathit{false} \lor v = \mathit{unknown} \\ Q, L \vdash q \;\xrightarrow[U]{U'}\; q'}
{Q, L \vdash \texttt{when $\ell_1$ |= $\ell_2$ then $p$ else $q$} \xrightarrow[U]{U'}\; q'}

\inferrule[\sizerules var-decl$\circlearrowleft$]
{U' = (\{(\ell,(\circlearrowleft,\bot_{T}))\}, 0, \langle \rangle) \\
Q, L \vdash p[x \rightarrow \ell] \;\xrightarrow[U]{U''}\; p'}
{Q, L \dcup \{\ell\} \vdash \texttt{$T$\;$x^\circlearrowleft$ ; $p$} \;\xrightarrow[U]{U' \sqcup U''} \texttt{$T$\;$x^\circlearrowleft$ ; $p'$}}

\inferrule[\sizerules start-var-decl$\rightarrow\downarrow$]
{\spt \neq \circlearrowleft \\
x \in \nameset \\
U' = (\{(\ell,(\spt,\bot_{T}))\}, 0, \langle \rangle) \\
Q, L \vdash p[x \rightarrow \ell] \;\xrightarrow[U]{U''}\; p'}
{Q, L \dcup \{\ell\} \vdash \texttt{$T$\;$x^{\spt}$ ; $p$} \;\xrightarrow[U]{U' \sqcup U''} \texttt{$T$\;$\ell^{\spt}$ ; $p'$}}

\inferrule[\sizerules prune]
{}
{Q, \{\} \vdash \texttt{prune} \xrightarrow[U]{(\bot,0,\langle \texttt{prune} \rangle)} \texttt{nothing}}

\inferrule[\sizerules resume-var-decl$\rightarrow\downarrow$]
{\ell \in \locset \\
v = \left\{
{\begin{tabular}{ll}
\text{$(\rightarrow, \bot_T)$} & \text{$\text{ if } \spt = \rightarrow$} \\
\text{$(\downarrow, \pi_2(\mathit{pop}(Q))(\ell))$} & \text{$\text{ if } \spt = \downarrow$}
\end{tabular}} \right. \\
U' = (\{(\ell,v)\}, 0, \langle \rangle) \\
Q, L \vdash p \;\xrightarrow[U]{U''}\; p'}
{Q, L \vdash \texttt{$T$\;$\ell^{\spt}$ ; $p$} \;\xrightarrow[U]{U' \sqcup\, U''} \texttt{$T$\;$\ell^{\spt}$ ; $p'$}}

\inferrule[\sizerules space-pruned]
{U^{B} \neq \langle\texttt{space $W$}\rangle}
{Q, \{\} \vdash \texttt{space $p$} \xrightarrow[U]{(\bot,0,\langle \texttt{space $\bot$} \rangle)} \texttt{nothing}}

\inferrule[\sizerules space]
{U^{B} = \langle\texttt{space $W$}\rangle \\
\bot, \{\} \vdash p \;\xrightarrow[U \sqcup (W,0,\langle \rangle)]{U'}\; p' \\
 U'^k = 0 \\
 U'^\rightarrow = U'^\circlearrowleft = \emptyset}
{Q, \{\} \vdash \texttt{space $p$} \xrightarrow[U]{(\bot,0,\langle \texttt{space $U'^\downarrow$}\rangle)} \texttt{nothing}}

\inferrule[\sizerules enter-seq]
{Q, L \vdash p \xrightarrow[U]{U'} p' \\ U'^k \neq 0}
{Q, L \vdash \texttt{$p$ ; $q$} \xrightarrow[U]{U'} \texttt{$p'$ ; $q$}}

\inferrule[\sizerules next-seq]
{U^B = B \circ B' \\
Q, L \vdash p \xrightarrow[(U^S, U^k, B)]{U'} p' \\
U'^k = 0 \\
Q, L' \vdash q \xrightarrow[(U^S, U^k, B')]{U''} q'}
{Q, L \dcup L' \vdash \texttt{$p$ ; $q$} \xrightarrow[U]{U' \sqcup^\circ U''} q'}

\inferrule[\sizerules par$^\lor$]
{Q, L \vdash p \xrightarrow[U]{U'} p' \\
Q, L' \vdash q \xrightarrow[U]{U''} q'}
{Q, L \dcup L' \vdash \texttt{$p$ || $q$} \xrightarrow[U]{U' \sqcup^{\lor} U''} \texttt{$p'$ || $q'$}}

\inferrule[\sizerules par$^\land$]
{Q, L \vdash p \xrightarrow[U]{U'} p' \\
Q, L' \vdash q \xrightarrow[U]{U''} q' \\
U'^k \neq 0 \land U''^k \neq 0}
{Q, L \dcup L' \vdash \texttt{$p$ <> $q$} \xrightarrow[U]{U' \sqcup^{\land} U''} \texttt{$p'$ <> $q'$}}

\inferrule[\sizerules exit-par$^\land$]
{Q, L \vdash p \xrightarrow[U]{U'} p' \\
Q, L' \vdash q \xrightarrow[U]{U''} q' \\
U'^k = 0 \lor U''^k = 0}
{Q, L \dcup L' \vdash \texttt{$p$ <> $q$} \xrightarrow[U]{U' \sqcup^{\land} U''} \texttt{nothing}}

\end{mathpar}
\caption{Behavioral semantics rules of spacetime.}
\label{spacetime-semantics-fig}
\end{figure*}

\subsection{Search Semantics}
\label{search-semantics}

In this section, we use the following relevant subset of spacetime:
\begin{grammar}
<p, q> ::= $p \<seq> q$ | $p \<hpar> q$ | $p \<dia> q$ | \texttt{space} $p$ | \texttt{prune} | $\alpha$
\end{grammar}
where $p,q \in Proc$ with $Proc$ the set of all the processes, and $\alpha$ is an atomic statement which is not composed of other statements.
We can extend the definitions given below to the full spacetime language without compositional issues.

We give the semantics of the search tree statements with a branch algebra.
We have a set of all branches defined as $B = \{\texttt{space } w \;|\; w \in Space^\downarrow\} \cup \{\texttt{prune}\}$.
That is to say, a branch is either labelled by a \texttt{world_line} space or pruned.

\begin{definition}[Branch algebra]
The branch algebra is defined over a sequence of branches $\langle B^{*}, \circ, \lor, \land \rangle$ where all operators are associative, $\circ$ is noncommutative, and $\lor$ and $\land$ are commutative.
The empty sequence $\langle \rangle$ is the identity element of the three operators.
\end{definition}
\noindent
The operators $\circ$, $\lor$ and $\land$ match the commutative and associative laws of the semantics of the operators \texttt{;},\texttt{||} and \texttt{<>} respectively.

\paragraph{Sequence composition}
Given $b_i, b_j \in B$ with $1 \leq i \leq n$ and $1 \leq j \leq m$, the sequence operator $\circ$ performs the concatenation of two sequences of branches as follows:
\begin{displaymath}
\langle b_1, \dots, b_n \rangle \circ \langle b'_1, \dots, b'_m \rangle  = \langle \;b_1,\ldots,b_n,b'_1,\ldots,b'_m\;\rangle
\end{displaymath}

\paragraph{Parallel compositions}

We define the operators $\lor^{1}$ and $\land^{1}$ to combine two branches and then lift these operators to sequences of branches.
Two sequences of branches are combined by repeating the last element of the shortest sequence when the sizes differ.
Given $w, w' \in Space^\downarrow$ and $b \in B$, we define the disjunctive parallel operators $\lor^{1}$ between two branches and $\lor$ between two sequences of branches as follows:
\begin{displaymath}
\begin{array}{ll}
b \lor^{1} \texttt{prune} &= b \\
\texttt{space $w$} \lor^{1} \texttt{space $w'$} &= \texttt{space $w \sqcup w'$} \\
\langle b_1, \dots, b_n \rangle \lor \langle b'_1, \dots, b'_m \rangle  &=  \\
  \multicolumn{2}{l}{\quad \left \{
  \begin{array}{l}
  \langle \;b_1 \lor^{1} b'_1,\;\ldots,\;b_{n-1} \lor^{1} b'_{m-1},\;b_n \lor^{1} b'_m\;\rangle\;\text{ if $n = m$}\\
  \langle \;b_1 \lor^{1} b'_1,\;\ldots,\;b_{n-1} \lor^{1} b'_m,\;b_n \lor^{1} b'_m\;\rangle\;\text{ if $n > m$}
  \end{array} \right.}
\end{array}
\end{displaymath}
\noindent
The case where $m > n$ is tackled by the commutativity of $\lor$.
The conjunctive parallel operators $\land^{1}$ and $\land$ are defined similarly but for \texttt{prune}:
\begin{displaymath}
\begin{array}{l}
b \land^{1} \texttt{prune} = \texttt{prune}\\
\end{array}
\end{displaymath}

This algebra allows us to delete, replace or increase the information in a branch.
For example, given a process $p$:
\begin{itemize}
\item \texttt{$p$ <> (space nothing ; prune)} deletes every branch created by $p$ but the first.
\item \texttt{$p$ <> (space nothing ; prune ; space nothing)} deletes the second branch.
\item \texttt{$p$ || (prune ; space $q$ ; prune)} increases the information in the second branch by $q$.
\end{itemize}
We can also obtain any permutation of a sequence of branches with a suited $push$ function.
The only operation not supported is weakening the information of one branch.
We have yet to find a use-case for such an operation.

\begin{figure*}
\begin{center}
\newcommand{\examplesize}{\small}
\begin{mathpar}
\examplesize
\inferrule*[Left={\examplesize start-var-decl$\rightarrow\downarrow$}]
{
  \inferrule*[Left={\examplesize exit-par-$\lor$}]
  {
    \inferrule*[Left={\examplesize when-true}]
    {
      U^V(\ell_0) \vDash 1 \twoheadrightarrow \mathit{true} \\
      \inferrule*[Left={\examplesize space}]
      {
        \inferrule*[Left={\examplesize hcall}]
        {\mathit{inc}(\ell_0^{\rw}) \xrightarrowdbl[\mathit{host}(S_2)]{H'} v}
        {Q, \{\} \vdash \texttt{inc($\ell_0^{\rw}$)} \xrightarrow[(S_2, 0, \langle \rangle)]{(\mathit{space}(H'),0,\langle \rangle)} \texttt{nothing}}
      }
      {Q, \{\} \vdash \texttt{space inc($\ell_0^{\rw}$)} \xrightarrow[(S_1, 0, \langle \texttt{space $S_2$} \rangle)]{(\{\},0,\langle \texttt{space $S_2$} \rangle)} \texttt{nothing}}
    }
    {Q, \{\} \vdash \texttt{when $\ell_0$ |= 1 then space inc($\ell_0^{\rw}$)}\;\xrightarrow[(S_1, 0, \langle \texttt{space $S_2$} \rangle)]{(\{\}, 0, \langle \texttt{space $S_2$} \rangle)} \texttt{nothing}}
    \quad
    \inferrule[{\examplesize hcall}]
    {\mathit{inc}(\ell_0^{\rw}) \xrightarrowdbl[\mathit{host}(S_1)]{H'} v}
    {Q, \{\} \vdash \texttt{inc($\ell_0^{\rw}$)} \xrightarrow[(S_1, 0, \langle \texttt{space $S_2$} \rangle)]{(\mathit{space}(H'),0,\langle \rangle)} \texttt{nothing}}
  }
  {Q, \{\} \vdash \texttt{(when $\ell_0$ |= 1 then space inc($\ell_0^{\rw}$)) <> inc($\ell_0^{\rw}$)}\;\xrightarrow[(S_1, 0, \langle \texttt{space $S_2$} \rangle)]{(S_1, 0, \langle \texttt{space $S_2$} \rangle)} \texttt{nothing} \\ U' = (\{(\ell_0,(\downarrow, 0))\}, 0, \langle \rangle)}
}
{Q, \{\ell_0\} \vdash \texttt{LMax x$^\downarrow$; ((when x |= 1 then space inc(x$^{\rw}$)) <> inc(x$^{\rw}$))}\;\xrightarrow[(\{(\ell_0,(\downarrow, 1))\}, 0, \langle \texttt{space $\{(\ell_0,(\downarrow, 2))\}$} \rangle)]{U' \,\sqcup\, (\{(\ell_0,(\downarrow, 1))\}, 0, \langle \texttt{space $\{(\ell_0,(\downarrow, 2))\}$} \rangle)} \texttt{nothing}}
\end{mathpar}
\end{center}
\caption{An example of derivation in the behavioral semantics.}
\label{example-derivation}
\vspace{-0.1cm}
\end{figure*}

\subsection{Semantics Rules}
\label{semantics-rules-sec}

The semantics rules of spacetime are given in Figure~\ref{spacetime-semantics-fig}.
We isolate host computations by relying on the \textit{host transition rule} $e \xrightarrowdbl[~H~]{H'} v$ which reduces the expression $e$ into the value $v$ with the input/output host environment $H$ and the output environment $H'$.
The interface between spacetime and the host language is realized by a pair of functions $(\mathit{host}, \mathit{space})$ such that $\mathit{host}$ maps the space $S$ into the host environment $H$ and vice versa.
We write $e \twoheadrightarrow v$ when the space of the program is not modified.
We explain each fragment of the semantics in the following paragraphs.

The axioms \textsc{nothing}, \textsc{pause} and \textsc{stop} set the completion code respectively to terminated, paused and stopped.
We leave the output space and branches empty.

The main interaction with the host language is given by the rule \textsc{hcall}.
The function $f$ and its arguments are evaluated in the host version of the input/output space, written $\mathit{host}(U^S)$.
The properties guaranteed by the spacetime semantics depend on the properties fulfilled by the host functions.

The rule \textsc{loop} simulates an iteration of the loop by extracting and executing the body $p$ outside of the loop.
We guarantee that $p$ is not instantaneous by forbidding the completion code $k$ to be equal to $0$.

The conditional rules \textsc{when-true} and \textsc{when-false} evaluate the entailment result of $x \vDash y$ to execute either $p$ or $q$.
In case the entailment status is $\mathit{unknown}$, which happens if $x$ and $y$ are not ordered, we promote $\mathit{unknown}$ to $\mathit{false}$.
This is reminiscent of the closed world assumption in logic programming: ``what we do not know is $\mathit{false}$''.

\subsubsection{Semantics of spacetime variables}
\label{hierarchical-variable}

The variable declaration rules register the variables in the space or queue memory.
A variable's name $x$ must be substituted to a unique location $\ell$.
Locations are necessary to distinguish variables with the same name in the space and queue---this is possible if the scope of the variable is re-entered several times during\footnote{This is a problem known as reincarnation in \textsf{Esterel}~\cite{berry-constructive-2002}.} and across instants.
In the rules \textsc{var-decl$\circlearrowleft$} and \textsc{start-var-decl$\rightarrow\downarrow$}, we extract a fresh location $\ell$ from $L$ and substitute $x$ for $\ell$ in the program $p$, which is written $p[x \to \ell]$.\footnote{The variable declaration must be evaluated with regard to its body, this is why the body $p$ follows the declaration.
We can transform any variable declaration \texttt{$\mathit{Type}$ $x^{\spt}$} which is not followed by any statement to \texttt{$\mathit{Type}$ $x^{\spt}$; nothing}.}
The substitution function is defined inductively over the structure of the program $p$.
We give its two most important rules:
\begin{displaymath}
\begin{array}{l}
y[x \to \ell] \mapsto \left\{
  \begin{array}{ll}
    \ell & \text{ if } x = y \\
    y & \text{ if } x \neq y \\
  \end{array}\right.\\
(T\; y^{\spt} \<seq> p)[x \to \ell] \mapsto \left\{
  \begin{array}{ll}
    T\; y^{\spt} \<seq> p & \text{ if } x = y \\
    T\; y^{\spt} \<seq> p[x \to \ell] & \text{ if } x \neq y \\
  \end{array}\right.
\end{array}
\end{displaymath}
\noindent
It replaces any identifier equals to $x$ by $\ell$, and stops when it reaches a variable declaration with the same name.

For \texttt{single_time} variables, we create a new location in each instant (\textsc{var-decl$\circlearrowleft$}).
For \texttt{single_space} and \texttt{world_line} variables, we create a new location only during the first instant of the statement (\textsc{start-var-decl$\rightarrow\downarrow$}), and the next instants reuse the same location (\textsc{resume-var-decl$\rightarrow\downarrow$}).

In the first instant, the values are initialized to the bottom element $\bot_T$ of the lattice $T$.
In the next instants, we retrieve the value of a \texttt{world_line} variable in the queue by popping one node, and then extracting the value at location $\ell$ from that node.
The values of \texttt{single_space} variables are transferred from one instant to the next by the reaction rules introduced in the next section.

\subsubsection{Semantics of search statements}

The statement \texttt{prune} is an axiom creating a single pruned branch.
For \texttt{space $p$}, we have two cases: either we execute $p$ under the input/output branch $\langle \texttt{space $W$} \rangle$ (rule~\textsc{space}), or if another process prunes this branch, we avoid executing $p$ (rule~\textsc{space-pruned}).
The execution of the \texttt{space} statement does not impact the variables in the current instant, which is materialized by setting the space to $\bot$ in the output universe.
In addition, we require that $p$ terminates instantaneously, only writes into \texttt{world_line} variables and does not create nested branches.

To specify the sequential and parallel statements, we extend join over $\mathit{Universe}$ with a branch operator.
We have $(S, k, B)\sqcup^\land (S', k', B')$ equals to $(S \sqcup S', k \sqcup k', B \land B')$, and similarly for $\circ$ and $\lor$.

To formalize the sequence \texttt{$p$ ; $q$}, we have the rule \textsc{enter-seq} which tackles the case where $p$ does not terminate during the current instant, and the rule \textsc{next-seq} where $p$ terminates and $q$ is executed.
The disjunctive parallel statement \texttt{$p$ || $q$} derives $p$ and $q$ concurrently and merges their output universes with $\sqcup^\lor$ (rule \textsc{par$^\lor$}).
Finally, the conjunctive parallel statement \texttt{$p$ <> $q$} is similar to \texttt{||} when none of $p$ or $q$ terminates (rule \textsc{par$^\land$}).
However, if one process terminates, we rewrite the statement to \texttt{nothing} which prevents this statement to be executed in future instants (rule \textsc{exit-par$^\land$}).
Note that the semantics of composition in space of \texttt{||} and \texttt{<>} match their respective semantics of composition in time.

\subsubsection{An example of derivation}

We illustrate the mechanics of the behavioral semantics with a short example:
\begin{lstlisting}[mathescape,language=bonsai]
LMax x$^\downarrow$; ((when x |= 1 then space inc(x$^{\rw}$)) $\texttt{<>}$ inc(x$^{\rw}$))
\end{lstlisting}
Two processes communicate over the variable $x$.
The first creates a branch incrementing $x$ by one if it is greater than $1$, while the second increments $x$ in the current instant.
To derive this process in the behavioral semantics, we set the input/output universe to $U = (\{(\ell_0,(\downarrow, 1))\}, 0, \langle \texttt{space $\{(\ell_0,(\downarrow, 2))\}$} \rangle)$ and attempt to prove that the output universe (the structure above the arrow) is equal to $U$.
For clarity, we set $S_1 = \{(\ell_0,(\downarrow, 1))\}$ and $S_2 = \{(\ell_0,(\downarrow, 2))\}$.
The derivation is given in Figure~\ref{example-derivation}.
We notice that the statement \texttt{space} is derived with the input/output space $S_2$ instead of $S_1$.
Operationally, it implies that the branch must be evaluated at the end of the current instant.

\subsection{Semantics Across Instants}
\label{universe-semantics}

\begin{figure}
\begin{mathpar}
\newcommand{\sizerulesa}{\small}
\sizerulesa
  \inferrule[\sizerulesa react]
  {\mathit{causal}(p) \\
   Q, \mathcal{L}_i \vdash p \xrightarrow[H_i]{U'} p' \\
   Q' = \mathit{push}(Q, U'^B) \\
   U'^k = 1 \text{ and $Q'$ is not empty} \\
   i+1, \mathcal{L} \vdash \langle Q', p' \rangle \xhookrightarrow[H]{H'} \langle Q'', p'' \rangle \\
   H'' = \{(j,U''\sqcup(U'^\rightarrow,0,\langle\rangle)) \;|\; (j,U'') \in H'\}}
  {i, \mathcal{L} \vdash \langle Q, p \rangle \xhookrightarrow[H]{\{(i,U')\} \sqcup H''} \langle Q'', p'' \rangle}

  \inferrule[\sizerulesa exit-react]
  {\mathit{causal}(p) \\
   Q, \mathcal{L}_i \vdash p \xrightarrow[H_i]{U'} p' \\
   Q' = \mathit{push}(Q, U'^B)\\
   U'^k \neq 1 \text{ or $Q'$ is empty}}
  {i, \mathcal{L} \vdash \langle Q, p \rangle \xhookrightarrow[H]{\{(i,U')\}} \langle Q', p' \rangle}
\end{mathpar}
\caption{Reaction rules of spacetime.}
\label{react-rules}
\end{figure}

A spacetime program is automatically executed until it terminates, stops or its queue of nodes becomes empty.
Therefore, we must lift the transition rule to succession of instants, which gives the following \textit{reaction rule}:
\begin{displaymath}
i, \mathcal{L} \vdash \langle Q, p \rangle \xhookrightarrow[H]{H'} \langle Q', p' \rangle
\end{displaymath}
where the state $\langle Q, p \rangle$ is rewritten into the state $\langle Q', p' \rangle$ with $Q$ a queue with a queueing strategy $(\mathit{pop},\mathit{push})$, and $p$ a process.
In addition, we have:
\begin{enumerate*}
\item[(i)] a counter of instants $i \in \mathbb{N}$,
\item[(ii)] a sequence of sets of locations $\mathcal{L} \in \storelat(\mathbb{N}, \locset)$ where $\mathcal{L}_i \in \mathcal{L}$ is the set of locations at the instant $i$,
\item[(iii)] the sequence of input/output universes $H \in \storelat(\mathbb{N}, \mathit{Universe})$ where $H_i$ is the input/output at the instant $i$, and
\item[(iv)] the sequence of output universes $H' \in \storelat(\mathbb{N}, \mathit{Universe})$.
\end{enumerate*}
The lifting to sequence of universes is inspired by \textsf{ReactiveML}~\cite{mandel-time-2015}.
The reaction rules are defined in Figure~\ref{react-rules}.
The rule \textsc{react} models the passing of time from one paused instant to the next.
Of interest, we notice that the values of the \texttt{single_space} variables are joined into all of the future universes.
We also observe that the two rules \textsc{react} and \textsc{exit-react} are exclusive on the termination condition.
We now discuss the side condition $causal(p)$ which performs the causality analysis of the program in each instant.

\subsection{Causality Analysis}
\label{causality-analysis}

Causality analysis is crucial to prove that spacetime programs are reactive, deterministic and extensive functions.
An example of non-reactive program is \texttt{when x |= y then f(write y) end}.
The problem is that if we add information in $y$, the condition \texttt{x |= y} might not be entailed anymore, which means that no derivation in the behavioral semantics is possible.
This is similar to emitting a signal in \textsf{Esterel} after we tested its absence.
Due to the lattice order on variables, we can however write on a value after an entailment condition, consider for example \texttt{when x |= y then f(write x) end}.
Whenever \texttt{x |= y} is entailed, it will stay entailed even if we later write additional information on $x$, so this program should be accepted.

The causality analysis symbolically executes an instant of a process, yielding the set of all symbolic paths reachable in an instant.
It also symbolically executes the paths of all branches generated in each instant.
For space reason, we only show the most important part of the causality analysis: the properties that a path must fulfil to be causal.
A path is a sequence of atomic statements $\langle a_1,\ldots, a_n \rangle$ where $a_i$ is defined as:
\begin{grammar}
<atom> ::= $x \vDash y$ | $f$($x_1^{w|r|\rw},\ldots,x_n^{w|r|\rw}$)
\end{grammar}
For example, the process \texttt{when $x$ |= $y$ then $f(x^r)$ else $g(x^r)$} generates two paths: $\langle x \vDash y, f(x^r) \rangle$ for the then-branch, and $\langle y \vDash x, g(x^r) \rangle$ for the else-branch.
A path $p$ is causal if for all atoms $a_i \in p$ the following two conditions hold.

First, for each entailment atom $a_i = x \vDash y$ we require:
\begin{equation}
\forall{z^b \in \mathit{Vars}(p_{i+1..|p|})},z = y \implies b = \texttt{r}
\label{read-after-entailment}
\end{equation}
with $\mathit{Vars}(p)$ the set of all variables in the path $p$.
It ensures all remaining accesses on $y$ to be read-only.

Second, for each function call $a_i = f(x_1^{b_1},..., x_n^{b_n})$ and each argument $x_k^{b_k}$ of $f$ we require:
\begin{equation}
\forall{z^b \in \mathit{Vars}(p_{i+1..|p|})}, x_k = z \land (b_k = \texttt{r} \lor b_k = \texttt{rw}) \implies b = \texttt{r}
\label{read-after-read}
\end{equation}
Whenever a variable is accessed with \texttt{read} or \texttt{readwrite}, it can only be read afterwards.
A consequence is that a variable cannot be accessed by two \texttt{readwrite} during a same instant.

\begin{definition}[Causal process]
A process is causal if for all its instants $i$, every path $p$ in the instant $i$ is causal ((\ref{read-after-entailment}) and (\ref{read-after-read}) hold).
\end{definition}

\subsection{Reactivity, Determinism and Extensiveness}
\label{proof-correctness}

We now only consider causal spacetime programs.
In this section, we sketch the proofs that the semantics of spacetime is deterministic, reactive and an extensive function during and across instants.
Importantly, these properties only hold if the underlying host functions meet the same properties.
The two first properties are typical of the synchronous paradigm and are defined as follows.

\begin{definition}[Determinism and reactivity]
For any state $\langle Q, p\rangle$, the derivation
\begin{displaymath}
0, \mathcal{L} \vdash \langle Q, p\rangle \xhookrightarrow[H]{H'} \langle Q', p' \rangle
\end{displaymath}
is deterministic (resp. reactive) if there is at most (resp. at least) one proof tree of the derivation.
\end{definition}

\begin{lemma}
The semantics of spacetime is reactive and deterministic.
\end{lemma}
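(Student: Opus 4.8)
The plan is to reduce both properties to a fixpoint analysis of a single instant and then lift the result across instants by induction on the instant counter. Within one instant the behavioral transition of Figure~\ref{spacetime-semantics-fig} is a consistency check: given the provided universe $U=I\sqcup O$ it derives an output $U'$, and the transition is valid exactly when $U'$ equals the guessed output. Thus determinism amounts to there being at most one consistent $U$ (and hence at most one proof tree) and reactivity to there being at least one. I will assume throughout, as the statement requires, that the host transition $e\xrightarrowdbl[H]{H'}v$ is itself deterministic and reactive.

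The first, routine, step is to show that the behavioral rules are \emph{syntax-directed}: for fixed $Q$, $L$ and a fixed guessed universe $U$, at most one rule applies and it fixes $U'$ and $p'$ uniquely. This is a structural induction on $p$ in which the rules sharing a head constructor are separated by side conditions that are functions of $U$: \textsc{when-true}/\textsc{when-false} by the entailment verdict $U^V(\ell_1)\vDash U^V(\ell_2)$ (a deterministic host query), \textsc{enter-seq}/\textsc{next-seq}, \textsc{par$^\land$}/\textsc{exit-par$^\land$} and \textsc{loop} by completion codes, \textsc{space}/\textsc{space-pruned} by the shape of $U^B$, and the declaration rules by the spacetime annotation and by whether the location is already allocated. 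The axioms emit uniquely determined universes (\textsc{hcall} through host determinism), and every compound rule combines the sub-universes---unique by the induction hypothesis---through the total operations $\sqcup^\circ$, $\sqcup^\lor$, $\sqcup^\land$ and the substitution $p[x\to\ell]$. Hence the map $\Phi\colon U\mapsto U'$ is a well-defined partial function on the universe lattice.

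Both properties then become statements about the fixpoints of $\Phi$ above the input $I$: determinism is uniqueness and reactivity is existence. I would obtain existence by Kleene iteration, starting from the least universe containing $I$ and applying $\Phi$ repeatedly; each application is extensive (tells are joins, host writes only add information, and \texttt{space}/\texttt{prune} only append branches), so the iterates form an increasing chain that stabilises because the program is finite and only finitely much information can be accumulated in one instant, and the stable value is a fixpoint whose derivation is the sought proof tree. The delicate point---and the main obstacle---is that the entailment atom is \emph{not} monotone, so without further care $\Phi$ could admit several fixpoints (breaking determinism) or an iteration could oscillate around a verdict that flips as more information arrives (breaking existence); this is exactly the hazard behind the non-reactive program \texttt{when x |= y then f(write y) end}. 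The causality conditions of Section~\ref{causality-analysis} remove it: condition~(\ref{read-after-entailment}) freezes the right operand $y$ of every tested $x\vDash y$ for the remainder of the path, and since $x\vDash y\equiv y\leq x$ is preserved as the left operand $x$ grows, a verdict once settled during the iteration is never revised; the causal schedule thereby \emph{forces} the verdict of each entailment, leaving no freedom in the control flow, whence the consistent universe is both reached (reactivity) and unique (determinism). Dually, condition~(\ref{read-after-read}) forbids a second \texttt{readwrite} after any \texttt{read}/\texttt{readwrite} on a variable, so every host call observes its arguments in their final state and commits its output once. The \texttt{space} statement is handled by recursing on a nested sub-instant under $U\sqcup(W,0,\langle\rangle)$, where the side conditions of rule~\textsc{space} (instantaneous termination, writes confined to \texttt{world\_line} variables, no nested branches) make this sub-derivation a terminating causal instant to which the same argument applies.

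Finally I would lift to the reaction rules of Figure~\ref{react-rules} by induction on $i$. The rules \textsc{react} and \textsc{exit-react} are mutually exclusive on the condition ``$U'^k=1$ and $Q'$ nonempty'', so per-instant uniqueness propagates to the whole reaction derivation, giving determinism; and reactivity lifts because each step needs only the existence of the single next transition, which the previous paragraph supplies, so every finite prefix of the reaction sequence is derivable even when the generated search tree is infinite. The hardest part throughout is taming the non-monotone entailment; once conditions~(\ref{read-after-entailment}) and~(\ref{read-after-read}) are shown to fix the operands of each entailment and each \texttt{readwrite} before they are consumed, both the existence of the consistent universe and its uniqueness follow.
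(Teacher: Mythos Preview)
The paper's own argument is much thinner than yours. It does not set up any fixpoint machinery: for reactivity it walks the syntactic forms and checks that the side conditions of the rules sharing a head are jointly exhaustive; for determinism it checks they are pairwise disjoint. The appeals to causality are local one-liners (e.g.\ ``due to the causality analysis (property~\ref{read-after-entailment}), the entailment result cannot further change during the derivation'' for \textsc{when}, and the read/write ordering for \textsc{hcall}). The question of whether a \emph{consistent} guessed universe $U$ exists and is unique is never isolated; the paper in effect treats the rule system as if it were operational and verifies it is total and functional on syntax, leaving the logical-semantics consistency to the causality hypothesis as a black box. Your syntax-directedness step matches this exactly; everything after that is additional.

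Your extra layer is conceptually the right thing to add, but the Kleene step has a genuine gap. You claim the iterates $U_0=I$, $U_{n+1}=\Phi(U_n)$ increase because ``each application is extensive''. The map $\Phi$ is neither extensive nor monotone, precisely because \textsc{when-false} also covers the \emph{unknown} verdict: from a small guess, $U^V(\ell_1)\vDash U^V(\ell_2)$ may evaluate to \emph{unknown} at stage $n$ (the else-branch runs and emits some $v$), while at stage $n{+}1$ the accumulated writes on $\ell_1$ flip the verdict to \emph{true} (the then-branch runs and emits an incomparable $v'$), so $\Phi(U_{n+1})\not\geq\Phi(U_n)$. Your causality argument only shows that a \emph{true} verdict is stable under growth of $x$ with $y$ frozen; it does not prevent a false/unknown verdict from later becoming true, which is exactly the transition that breaks the chain. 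What makes the construction go through is not naive iteration on $\Phi$ but an evaluation that \emph{respects the causal schedule}: evaluate an entailment only after every write that can reach its right operand has been committed, and call a host function only after the writes and readwrites feeding its arguments are committed (condition~(\ref{read-after-read})). That is the can/cannot scheduling of Section~\ref{implementation}, not a least-fixpoint computation on $\Phi$. If you replace the Kleene paragraph by ``conditions~(\ref{read-after-entailment}) and~(\ref{read-after-read}) induce a well-founded evaluation order on the atoms of each path; evaluating in that order deterministically constructs a universe $U$ which is then consistent, and any consistent $U$ must agree with it atom by atom along the same order'', both existence and uniqueness follow and the lift to \textsc{react}/\textsc{exit-react} is as you wrote.
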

\noindent
The proofs are given in Appendices~\ref{proof-reactivity} and~\ref{proof-determinism}.
They essentially verify the completeness and disjointness of the rules.

\begin{lemma}
The semantics of spacetime is extensive over its space during an instant.
\end{lemma}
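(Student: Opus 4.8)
The plan is to prove extensiveness by structural induction on the derivation $Q, L \vdash p \xrightarrow[U]{U'} p'$, after first making the baseline explicit: within a single instant the space we must not diminish is the \emph{initial space} $S_0$ injected by the variable-declaration rules, namely the value $\pi_2(\mathit{pop}(Q))(\ell)$ popped from the queue for each \texttt{world\_line} location, the element $\bot_T$ for each freshly declared or \texttt{single\_time} location, and (through the reaction rules of Figure~\ref{react-rules}) the previous value of each \texttt{single\_space} location. Since the preceding lemma gives a unique valid derivation, the instant fixes a well-defined universe $U = I \sqcup U'$, so it suffices to show $S_0 \le U'^S$; extensiveness of the instant-as-a-function then follows because $U^S = I^S \sqcup U'^S \ge U'^S \ge S_0$.

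The base cases split into two kinds. The axioms \textsc{nothing}, \textsc{pause}, \textsc{stop} and \textsc{prune} emit the empty space $\bot$, and \textsc{space} also contributes only $\bot$ to the current space (its effect is confined to the branch component $U^B$, whose extensiveness is the concern of the across-instant lemma), so none of them can lower any value. The decisive base case is \textsc{hcall}: there the output space is $\mathit{space}(H')$ obtained by running $f$ on $\mathit{host}(U^S)$, and the hypothesis that every host function is extensive yields $\mathit{host}(U^S) \le H'$, hence $U^S \le \mathit{space}(H') = U'^S$. The declaration rules \textsc{var-decl$\circlearrowleft$}, \textsc{start-var-decl$\rightarrow\downarrow$} and \textsc{resume-var-decl$\rightarrow\downarrow$} each join the corresponding component of $S_0$ into $U'$, which plants the baseline inside the output space.

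For the inductive cases the key observation is that the three extended joins coincide with the ordinary join on the space component: $(S,k,B) \sqcup^{\lor} (S',k',B')$, $(S,k,B) \sqcup^{\land} (S',k',B')$ and $(S,k,B) \sqcup^{\circ} (S',k',B')$ all carry space $S \sqcup S'$, differing only on completion codes and branches. Consequently \textsc{loop}, \textsc{enter-seq}, \textsc{next-seq}, \textsc{par$^\lor$}, \textsc{par$^\land$} and \textsc{exit-par$^\land$} assemble their output space as a least upper bound of the recursively obtained spaces, so monotonicity and extensiveness of $\sqcup$ propagate the induction hypothesis upward; the conditional rules \textsc{when-true} and \textsc{when-false} only read $U^V(\ell_1) \vDash U^V(\ell_2)$ and pass the chosen branch's space through unchanged.

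The main obstacle is the self-referential, fixpoint character of the behavioral semantics: every sub-derivation evaluates its host calls and entailments against the \emph{same} shared input/output space $U^S$, not against an incrementally built store, so extensiveness cannot be read off a left-to-right evaluation order and must instead be argued at the global fixpoint. This is precisely where the causality analysis (Section~\ref{causality-analysis}) and the preceding reactivity/determinism lemma enter: causality forbids a \texttt{readwrite} from reading a value that a later write would invalidate, so the unique $U$ is consistent, and the baseline $S_0$, being joined into $U = I \sqcup U'$, genuinely sits below $U^S$. The argument is therefore only as strong as its hypothesis---it holds exactly when the underlying host functions are themselves extensive, matching the caveat attached to the lemma.
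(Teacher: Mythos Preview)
Your proposal is correct and rests on the same two observations the paper uses: the only places the space is ever written are the variable-declaration rules, which add their contribution via $\sqcup$, and the \textsc{hcall} rule, whose output space is $\mathit{space}(H')$ and is extensive by assumption on host functions. The paper's proof is just these two sentences; you have unfolded them into a full structural induction over the derivation, with explicit treatment of the base axioms, the three $\sqcup^{\circ/\lor/\land}$ operators coinciding with $\sqcup$ on the space component, and the role of causality in pinning down the unique fixpoint $U$. That extra detail is sound and arguably more convincing, but the underlying argument is identical.
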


\begin{proof}
Any write in the space is done through a variable declaration or a host function.
The declaration rules only add more information into the space by using the join operator $\sqcup$.
Otherwise, this property depends on the extensiveness of the host functions.
\end{proof}

To define the extensiveness property of a program across instants, we rely on the notion of \textit{observable space}.
Given a space $S \in Space$, its observable subset $obs(S) \subseteq S$ is the set of variables that can still be used in a future instant.

\begin{lemma}
Every observable variable is stored in either the queue or in a \texttt{single_space} variable.
\label{obs-spacetime}
\end{lemma}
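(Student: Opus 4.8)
The plan is to proceed by a case analysis on the spacetime annotation of an observable variable. Every variable carries exactly one annotation from $\mathit{spacetime} = \{\rightarrow,\circlearrowleft,\downarrow\}$, and these annotations index the three memories through the projections $S^\rightarrow$, $S^\circlearrowleft$ and $S^\downarrow$. It therefore suffices to establish three facts: a \texttt{single_space} variable satisfies the conclusion by definition; a \texttt{world_line} variable is stored in the queue; and a \texttt{single_time} variable is never observable, so the statement holds for it vacuously. Since $obs(S) \subseteq S$ contains exactly the variables usable in a future instant, tracking how each kind of datum does or does not cross the reaction boundary of rule \textsc{react} will settle the claim.

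First I would dispatch the two easy cases. For a \texttt{single_space} ($\rightarrow$) variable the conclusion is immediate, since it is already a \texttt{single_space} variable; one may additionally note that its value legitimately persists, as the clause $H'' = \{(j,U''\sqcup(U'^\rightarrow,0,\langle\rangle)) \mid (j,U'') \in H'\}$ of rule \textsc{react} joins the $U'^\rightarrow$ component into every future universe. For a \texttt{world_line} ($\downarrow$) variable I would trace its flow across one application of \textsc{react}: the branches produced in the instant are pushed by $Q' = \mathit{push}(Q, U'^B)$, and by the definition $B = \{\texttt{space } w \mid w \in Space^\downarrow\} \cup \{\texttt{prune}\}$ together with rule \textsc{space}, whose output branch is $\langle \texttt{space } U'^\downarrow\rangle$, every branch records precisely a \texttt{world_line} space. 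Hence the value at a \texttt{world_line} location $\ell$ is stored in $Q'$ and recovered in the next instant by rule \textsc{resume-var-decl$\rightarrow\downarrow$}, whose premise reads $(\downarrow, \pi_2(\mathit{pop}(Q))(\ell))$. This shows such a variable is stored in the queue.

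The substance of the argument lies in the \texttt{single_time} ($\circlearrowleft$) case, where I must show that such a variable cannot be used in any future instant and so lies outside $obs(S)$. I would argue that no $\circlearrowleft$ datum survives a transition across \textsc{react} by exhausting the two channels crossing the reaction boundary. On the one hand it is not carried into the queue: the side condition $U'^\circlearrowleft = \emptyset$ of rule \textsc{space} forbids branches from mentioning \texttt{single_time} variables, so nothing of the form $S^\circlearrowleft$ appears in $U'^B$ and none is stored by $\mathit{push}$. On the other hand it is not propagated through the universe, since the only component joined into future universes in \textsc{react} is $U'^\rightarrow$. Consequently, at instant $i+1$ the name is reallocated by rule \textsc{var-decl$\circlearrowleft$} to a fresh location initialised to $\bot_T$, and the earlier binding is unreachable; hence no \texttt{single_time} variable is observable.

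The main obstacle I anticipate is making precise the informal notion ``can still be used in a future instant'' that defines $obs(S)$, and then certifying that the \texttt{single_time} data is genuinely \emph{dropped} rather than merely overwritten. Concretely, this reduces to checking every way information can cross the reaction boundary---namely $\mathit{push}$ into $Q$ and the $U'^\rightarrow$ contribution of \textsc{react}---and confirming that each ignores the $\circlearrowleft$ component. Once this exhaustiveness over channels is secured, the remaining cases follow at once from the partition of variables by their annotation.
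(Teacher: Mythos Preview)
Your approach is essentially the paper's: a case split on the spacetime annotation, with the \texttt{single\_time} case handled by showing non-observability and the \texttt{single\_space} case immediate. Your argument is in fact more detailed than the paper's in tracing the relevant rules.

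There is one small subcase you do not cover that the paper handles explicitly. In the \texttt{world\_line} case you assert that such a variable is stored in the queue, arguing via rule \textsc{space} and $\mathit{push}(Q,U'^B)$. But a branch may also be $\texttt{prune}$, in which case no $Space^\downarrow$ datum is recorded and nothing for that branch reaches $Q'$. This does not break the lemma, since a \texttt{world\_line} variable whose only continuation is pruned is not observable (no future instant can pop a node containing it), but your claim ``a \texttt{world\_line} variable is stored in the queue'' is too strong as stated; it should be ``an \emph{observable} \texttt{world\_line} variable is stored in the queue,'' with the pruned subcase discharged by non-observability. The paper makes exactly this point in one sentence.
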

\begin{proof}
The \texttt{world_line} variables are stored in a queue of nodes when pushed (rule~\textsc{react}).
In the case of a pruned node, the \texttt{world_line} variables are not observable since no child node can ever used their values again.
The \texttt{single_time} variables are reallocated in each instant, thus not observable in future instants.
\end{proof}

\begin{lemma}[Extensiveness]
Given a sequence of universes $H$ and two instant indices $i > j$, we have $H_i^S \vDash obs(H_j^S)$.
\end{lemma}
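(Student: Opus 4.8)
The plan is to unfold the entailment and reduce across-instant extensiveness to a statement about the two carriers of observable information. Since $a \vDash b$ means $b \leq a$, the goal $H_i^S \vDash obs(H_j^S)$ is exactly $obs(H_j^S) \leq H_i^S$, i.e. no observable information is lost between instant $j$ and any later instant $i$. Because $\leq$ on $\mathit{Space}$ is transitive, I would first establish the claim for consecutive instants ($i = j+1$) and then lift it to arbitrary $i > j$ by induction on $i - j$. The reduction to two cases comes directly from Lemma~\ref{obs-spacetime}: every variable in $obs(H_j^S)$ is either a \texttt{single\_space} variable or a \texttt{world\_line} variable pushed, in a non-pruned node, onto the queue. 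I would therefore treat $obs(H_j^S)$ as the join of these two sub-stores and bound each one separately.

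For the \texttt{single\_space} fragment the argument is direct from the reaction rules. In rule \textsc{react} the produced single-space store is joined into every future universe via $H'' = \{(k, U'' \sqcup (U'^\rightarrow,0,\langle\rangle)) \mid (k,U'') \in H'\}$, so $H_j^\rightarrow \leq H_{j+1}^\rightarrow$, and a straightforward induction gives $H_j^\rightarrow \leq H_i^\rightarrow$ for all $i > j$. Since the single-space part of $obs(H_j^S)$ is contained in $H_j^\rightarrow$ and join is monotone, this fragment is dominated by $H_i^S$.

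For the \texttt{world\_line} fragment I would chain three facts. First, by the preceding lemma (extensiveness during an instant) the value a node carries when popped is only extended while the instant runs, never weakened. Second, rule \textsc{space} builds each child branch from $U'^\downarrow$, the world-line store obtained after executing the branch body under an input space that already contains the parent's world-line values $W$; hence each created branch entails the parent node it refines. Third, the queueing-strategy property (Definition~\ref{queueing-property}) states precisely that $\mathit{push} \circ f \circ \mathit{pop}$ is extensive over $\ST$ for every extensive $f$; instantiating $f$ with the instant transition — extensive by the first two facts — shows that the queue grows in the Smyth order from instant $j$ to instant $i$, so that each node still poppable at instant $i$ dominates a node present in the queue at instant $j$. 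Reading off the world-line store from the popped node via rule \textsc{resume-var-decl$\rightarrow\downarrow$} then bounds the world-line fragment of $H_i^S$ below, and joining this with the single-space bound yields $obs(H_j^S) \leq H_i^S$; transitivity closes the induction.

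The main obstacle is the \texttt{world\_line} case, where backtracking decouples the per-variable entailment on $\mathit{Space}$ from the set-level Smyth order on the queue $\ST$. The delicate step is to show that the node \emph{actually} popped at instant $i$ dominates the observable world-line content of instant $j$, rather than merely some element of the earlier queue: this is where the extensiveness guarantee of the queueing strategy must be combined with the fact that \textsc{space} only refines — and never discards — the parent's world-line store, so that extensiveness is preserved along each root-to-leaf path even while the search explores siblings. Making this compatibility between the Smyth order and the pointwise store order precise is the crux of the argument; the \texttt{single\_space} fragment is comparatively routine.
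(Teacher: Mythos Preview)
Your approach is essentially the same as the paper's: both invoke Lemma~\ref{obs-spacetime} to reduce to the \texttt{single\_space} and queue (\texttt{world\_line}) cases, handle the former via the join in rule \textsc{react} plus induction on instants, and handle the latter by appealing to the extensiveness guarantee of Definition~\ref{queueing-property}. Your write-up is considerably more detailed than the paper's terse sketch---in particular you make explicit the passage through consecutive instants and the role of rule \textsc{space} in ensuring each child refines its parent---and you correctly flag the gap between the Smyth order on the queue and the pointwise order on stores, a subtlety the paper's proof simply elides.
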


\begin{proof}
By Lemma~\ref{obs-spacetime}, it is sufficient to only look at the queue and \texttt{single_space} variables:
\begin{enumerate*}
\item[(i)] the queue is extensive by Definition~\ref{queueing-property} of the queueing strategy, and
\item[(ii)] the \texttt{single_space} variables are joined with their previous values (rule \textsc{react}), thus \texttt{single_space} variables that exist in $H_i^S$ and $H_j^S$ are ordered by induction on the instant indices.
\end{enumerate*}
Therefore our semantics is extensive with regard to the sequence of universes derived.
\end{proof}

\section{Constraint Programming in Spacetime}
\label{advanced-strategies}

In Section~\ref{spacetime-programming}, we defined a process generating an infinite binary search tree.
As the underlying structure of the state space is a lattice, the ``raw state space'' can be programmed by the user.
We demonstrate this fact by programming a process generating the state space of a constraint satisfaction problem (CSP).

A search strategy can be specialized or generic with regard to the state space.
For example, the strategy IDS (introduced in Section~\ref{peek-runtime}) can be reused on the CSP state space without modification.
As an additional example of generic search strategy, we consider limited discrepancy search (LDS) and its variants.
It can be combined effortlessly with IDS and the CSP state space generator.
We also introduce a branch and bound strategy which is bound to the CSP state space.
Overall, the goal is to show that search strategies can be developed independently from the state space while retaining their compositionality.


\subsection{Generating the CSP State Space}
\label{csp-state-space}

We consider a basic but practical solver using the propagate and search algorithm presented in Section~\ref{lattice-csp}.
\begin{lstlisting}[mathescape, language=bonsai]
class Solver {
  single_time ES consistent = unknown;
  ref world_line VStore domains;
  ref world_line CStore constraints;
  public Solver(VStore domains, CStore constraints) {$\ldots$}
  public proc search = par run propagation()$~\texttt{<>}~$run branch() end
  flow propagation =
    consistent$~\texttt{<-}~$constraints.propagate(readwrite domains);
    when consistent$\texttt{|=}$ true then prune end
  end
  flow branch =
    when unknown$~\texttt{|=}~$consistent then
      single_time IntVar x = failFirstVar(domains);
      single_time Integer v = middleValue(x);
      space constraints$~\texttt{<-}~$ x.le(v) end; // $x \leq v$
      space constraints$~\texttt{<-}~$ x.gt(v) end  // $x > v$
    end
  // Interface to the Choco solver.
  private IntVar failFirstVar(VStore domains) { $\ldots$ }
  private Integer middleValue(IntVar x) { $\ldots$ }  }
\end{lstlisting}
\noindent
This example introduces new elements of syntax:
\begin{enumerate*}
\item[(i)] the \texttt{ref} keyword which indicates that the variable name is an alias to a spacetime variable declared in another class,
\item[(ii)] the tell operator \texttt{x <- e} which is a syntactic sugar for \texttt{write x.join(e)}, the join operation $x = x \sqcup e$, and
\item[(iii)] the keywords \texttt{true}, \texttt{false} and \texttt{unknown} that are elements of the lattice $\mathit{ES}$ explained below.
\end{enumerate*}
We also remark that \texttt{read} annotations apply by default when not specified on variables.

The lattices \texttt{VStore} and \texttt{CStore} are respectively the variable store and the constraint store.
The constraint solver \textsf{Choco}~\cite{choco} is abstracted behind these two lattices and provides the main operations to propagate and branch on the state space.
The branching strategy is usually a combination of a function selecting a variable in the store (here \texttt{failFirstVar}) and selecting a value in the domain of the variable (here \texttt{middleValue}).
The two variables storing these results are annotated with \texttt{single_time} since they are only useful in the current instant.
We split the state space with the constraints $x \leq v$ and $x > v$.
In the implementation, this code is organized in a more modular way so we can assemble various parts of the branching strategies.

The lattice \texttt{ES} is defined as $\{\mathit{true}, \mathit{false}, \mathit{unknown}\}$ with the total order $\mathit{false} \vDash \mathit{true} \vDash \mathit{unknown}$.
It is used to detect if the current node of the CSP is a solution ($\mathit{true}$), a failed node ($\mathit{false}$) or if we do not know yet ($\mathit{unknown}$).
In the process \texttt{propagation}, we prune the current subtree if we reached a solution or failed node.



\subsection{Branch and Bound Search}
\label{bab}

\begin{figure*}[t]
\begin{center}
\begin{tikzpicture}
\tikzstyle{time}=[draw,shape=circle,circle,fill,inner sep=1pt]
\tikzstyle{current}=[draw,shape=circle,circle,fill,inner sep=2pt]
\tikzstyle{explore}=[draw,shape=circle,circle,fill=white,inner sep=1.5pt]

\begin{scope}[yshift=-1.8cm]
\node[anchor=south] at (3.75,0) {$t_1$};
\node[anchor=south] at (5.65,0) {$t_2$};
\node[anchor=south] at (7.9,0) {$t_3$};
\node[anchor=south] at (11.1, 0) {$t_5$};
\node[anchor=south] at (14.1325, 0) {$t_6$};
\end{scope}

\begin{scope}[xshift=0cm,yshift=-6cm]
  \draw[rounded corners] (3.2,4.2) rectangle (4.2,2.8);

  \draw[thick] (4.2,3.5) node{} -- (4.5,3.5) node[time]{} -- (4.8,3.5);
  \draw[thick] (3,3.5) node[time]{} -- (3.2,3.5);

  \begin{scope}[xshift=0.7cm]
  \draw[thick] (3,3.7) node[current]{} -- (2.7,3.3) node[explore]{};
  \draw[thick] (3,3.7) node[current]{} -- (3.3,3.3) node[explore]{};
  \end{scope}
\begin{scope}[xshift=2.3cm]
  \draw[rounded corners] (2.5,4.2) rectangle (4.2,2.8);

  \draw[thick] (4.2,3.5) node{} -- (4.5,3.5) node[time]{} -- (4.8,3.5);

  \begin{scope}[xshift=0.5cm]

  \draw[thick] (3,4) node[time]{} -- (2.6,3.5) node[current]{};
  \draw[thick] (3,4) node[time]{} -- (3.4,3.5) node[explore]{};

  \draw[thick] (2.6,3.5) node[time]{} -- (2.3,3) node[explore]{};
  \draw[thick] (2.6,3.5) node[time]{} -- (2.9,3) node[explore]{};
  \end{scope}

\begin{scope}[xshift=2.3cm]
  \draw[rounded corners] (2.5,4.2) rectangle (4.2,2.8);
  \draw[thick] (4.2,3.5) -- (4.5,3.5) node[time]{};
  \draw[thick,densely dotted] (4.5,3.5) -- (5,3.5);
  \draw[thick] (5,3.5) node[time]{} -- (5.3,3.5);

  \begin{scope}[xshift=0.5cm]

  \draw[thick] (3,4) node[time]{} -- (2.6,3.5) node[time]{};
  \draw[thick] (3,4) node[time]{} -- (3.4,3.5) node[explore]{};

  \draw[thick] (2.6,3.5) node[time]{} -- (2.3,3) node[current]{};
  \draw[thick] (2.6,3.5) node[time]{} -- (2.9,3) node[explore]{};
  \end{scope}

\begin{scope}[xshift=3.5cm]
  \draw[rounded corners] (1.8,4.2) rectangle (4.2,2.8);
  \draw[thick] (4.2,3.5) node{} -- (4.5,3.5) node[time]{} -- (4.8,3.5);

  \draw[thick] (3,4) node[time]{} -- (2.4,3.5) node[time]{};
  \draw[thick] (3,4) node[time]{} -- (3.6,3.5) node[current]{};

  \draw[thick] (2.4,3.5) node[time]{} -- (2.7,3) node[time]{};
  \draw[thick] (2.4,3.5) node[time]{} -- (2.1,3) node[time]{};

  \draw[thick] (3.6,3.5) node[time]{} -- (3.3,3) node[explore]{};

\begin{scope}[xshift=3cm]
  \draw[thick] (1.5,3.5) node[time]{} -- (1.8,3.5);
  \draw[rounded corners] (1.8,4.2) rectangle (4.2,2.8);

  \draw[thick] (3,4) node[time]{} -- (2.4,3.5) node[time]{};
  \draw[thick] (3,4) node[time]{} -- (3.6,3.5) node[time]{};

  \draw[thick] (2.4,3.5) node[time]{} -- (2.7,3) node[time]{};
  \draw[thick] (2.4,3.5) node[time]{} -- (2.1,3) node[time]{};

  \draw[thick] (3.6,3.5) node[time]{} -- (3.3,3) node[current]{};
\end{scope}
\end{scope}
\end{scope}
\end{scope}
\end{scope}

\end{tikzpicture}
\end{center}
\caption{Combination of bounded depth and bounded discrepancy search.}
\label{bds-lds-fig}
\vspace{-0.2cm}
\end{figure*}

Branch and bound (BAB) is an algorithm to find the optimal solution of a CSP according to an objective function.
BAB reasons over the whole search tree by keeping track of the \textit{best solution} obtained so far, in contrast to propagation which operates on a single node at a time.
It is implemented in the following class \texttt{MinimizeBAB}.

\begin{lstlisting}[mathescape,language=bonsai]
public class MinimizeBAB {
  ref world_line VStore domains;
  ref single_time ES consistent;
  ref single_space IntVar x;
  single_space LMin obj = bot;
  public MinimizeBAB(VStore domains, ES consistent, IntVar x) {$...$}
  public proc solve = par run minimize() <> run yield_objective() end
  proc minimize =
    loop
      when consistent== true then
        single_space LMin pre_obj = new LMin(x.getLB());
        pause;
        obj <- pre_obj;
      else pause end
    end
  flow yield_objective =
    consistent <- updateBound(write domains, write x, read obj)
  static ES updateBound(VarStore domains, IntVar x, LMin obj) { $...$ }
\end{lstlisting}
\noindent
Along with the current variable store, we have the variable \texttt{x} to be minimized and its current best bound \texttt{obj} of type \texttt{LMin}.
The \texttt{single_space} attribute indicates that the bound \texttt{obj} is global to the search tree and will not be backtracked.
The class has two main processes:
\begin{enumerate*}
\item[(i)] \texttt{minimize} strengthens the bound \texttt{obj} with the value obtained in the previous solution node, and
\item[(ii)] \texttt{yield_objective}, through the function \texttt{updateBound}, interfaces with \textsf{Choco} to update \texttt{domains} with $x < obj$, so the next solution we find have a better bound.
\end{enumerate*}

There is an important detail to notice: we use a temporary variable \texttt{pre_obj} to store the latest bound instead of updating \texttt{obj} directly.
Interestingly, if we do not, the causality analysis will fail since we have a cyclic dependency in the data: \texttt{obj} depends on \texttt{domains} and vice versa.
Fortunately, the causality analysis prevents us from having a bug: adding the current bound in the CSP would turn a solution node into a failed node.


\subsection{Limited Discrepancy Search and Variants}

For some problems, the branching strategy can order the branches such that the left one is more likely to reach a solution first.
Limited discrepancy search (LDS) was introduced to take advantage of this ordering property.
It is based on the notion of \textit{discrepancies} which is the number of right branches taken to reach a leaf node.
In its original version~\cite{harvey-limited-1995}, LDS successively increases the number of discrepancies a branching strategy can take by restarting the exploration of the full tree.
The paths with $0$, $1$, $2$ and $3$ discrepancies in a tree of depth $3$ are given as follows:
\begin{center}
\begin{tikzpicture}
\tikzstyle{time}=[draw,shape=circle,circle,fill,inner sep=1pt]
\tikzstyle{current}=[draw,shape=circle,circle,fill,inner sep=2pt]
\tikzstyle{explore}=[draw,shape=circle,circle,fill=white,inner sep=1.5pt]

\draw[thick] (3,3) node[time]{} -- (2.8,2.65) node[time]{} -- (2.6, 2.3) node[time]{} -- (2.4, 1.95) node[time]{};

\begin{scope}[xshift=1cm] \draw[thick] (3,3) node[time]{} -- (2.8,2.65) node[time]{} -- (2.6, 2.3) node[time]{} -- (2.8, 1.95) node[time]{}; \end{scope}
\begin{scope}[xshift=1.5cm] \draw[thick] (3,3) node[time]{} -- (2.8,2.65) node[time]{} -- (3, 2.3) node[time]{} -- (2.8, 1.95) node[time]{}; \end{scope}
\begin{scope}[xshift=2cm] \draw[thick] (3,3) node[time]{} -- (3.2,2.65) node[time]{} -- (3, 2.3) node[time]{} -- (2.8, 1.95) node[time]{}; \end{scope}

\begin{scope}[xshift=3cm] \draw[thick] (3,3) node[time]{} -- (2.8,2.65) node[time]{} -- (3, 2.3) node[time]{} -- (3.2, 1.95) node[time]{}; \end{scope}
\begin{scope}[xshift=3.5cm] \draw[thick] (3,3) node[time]{} -- (3.2,2.65) node[time]{} -- (3, 2.3) node[time]{} -- (3.2, 1.95) node[time]{}; \end{scope}
\begin{scope}[xshift=4cm] \draw[thick] (3,3) node[time]{} -- (3.2,2.65) node[time]{} -- (3.4, 2.3) node[time]{} -- (3.2, 1.95) node[time]{}; \end{scope}

\begin{scope}[xshift=5cm] \draw[thick] (3,3) node[time]{} -- (3.2,2.65) node[time]{} -- (3.4, 2.3) node[time]{} -- (3.6, 1.95) node[time]{}; \end{scope}
\end{tikzpicture}
\end{center}
The first iteration generates the leftmost path, the second iteration allows one discrepancy, and so on.
The search is complete if the discrepancy limit is not reached during one iteration.
An iteration of LDS is programmed in spacetime as follows:
\begin{lstlisting}[mathescape,language=bonsai]
public class BoundedDiscrepancy {
  single_space LMax limit;
  world_line LMax dis = new LMax(0);
  public BoundedDiscrepancy(LMax limit) { $...$ }
  public flow bound =
    space nothing end;
    when dis |= limit then prune
    else space readwrite dis.inc() end end
  end }
\end{lstlisting}
\noindent
Initially, the discrepancy counter \texttt{dis} is set to $0$.
The left branch is always taken, which we represent with a neutral \texttt{space nothing end} statement.
The right branch is taken only if the discrepancies counter is less than the limit, otherwise we prune this branch.
We can restart this search with the same technique as the one used for IDS (Section~\ref{peek-runtime}).

A drawback of LDS is that at each iteration $k$, it re-explores all paths with $k$ \textit{or less} discrepancies.
In~\cite{Korf96improvedlimited}, Korf proposes an improved version of LDS (ILDS) where only paths with exactly $k$ discrepancies are explored.
We provide a library of reusable improved LDS strategies including ILDS, depth-bounded discrepancy search (DDS)~\cite{walsh-depth-bounded-1997} and LDS variants~\cite{prosser-limited-2011} in the implementation.

In addition to creating a search strategy from scratch, we often need to assemble existing strategies to obtain the best of two approaches.
For example, the combination of LDS with IDS is discussed in~\cite{harvey-limited-1995}, as well as the combination of DDS with IDS in~\cite{walsh-depth-bounded-1997}.
These combinations can be easily programmed in spacetime; we obtain the first by combining \texttt{BoundedDepth} and \texttt{BoundedDiscrepancy}:
\begin{lstlisting}[mathescape, language=bonsai]
module BoundedDepth bd = new BoundedDepth(new LMax(2));
module BoundedDiscrepancy bdis =
  new BoundedDiscrepancy(new LMax(1));
par run db.bound() $\texttt{<>}$ run bdis.bound() end
\end{lstlisting}
\noindent
The result of this combination is shown in Figure~\ref{bds-lds-fig}.
Similarly we can use the disjunctive parallel operator \texttt{||} to obtain their union.
What's more, we can apply this strategy to the CSP state space, possibly augmented with the BAB process, in the very same way.

\section{Implementation}
\label{implementation}

The compiler of spacetime performs static analyses to ensure well-formedness of the program.
It includes common analyses and transformations on synchronous programs such as causality analysis, detection of instantaneous loop and reincarnation~\cite{tardieu-loops-2005,esterel-compilation}.
Specifically in spacetime, we ensure that every statement \texttt{space $p$} has an instantaneous body and does not contain nested \texttt{space} or \texttt{prune} statements.
In addition, we provide several analyses to integrate Java and spacetime in a coherent way, especially for initializing objects with existing spacetime variables (keyword \texttt{ref}).
These analyses are out of scope in this paper, but we provide a comprehensive list of the analyses in the file \texttt{src/errors.rs} of the implementation.

As shown in Section~\ref{peek-runtime}, every spacetime statement is mapped to a \textit{synchronous combinator} encoding its behavior at runtime.
Synchronous combinators are also used in the context of synchronous reactive programming---basically Esterel without reaction to absence---in the Java library \textsf{SugarCubes}~\cite{boussinot-java-2000,susini-reactive-2006}.
In this section, we overview how these combinators are scheduled in the runtime.

\textbf{Replicating.} Every spacetime program presented in this paper, as well as the experiments below, are available in the repository \url{https://github.com/ptal/bonsai/tree/PPDP19}.

\subsection{Scheduling Algorithm}

The main purposes of the runtime are to dynamically schedule concurrent processes, to retain the state of the program from an instant to the next, and to push and pop variables onto the queue.
To achieve these goals, we extend the structures introduced in the behavioral semantics (Section~\ref{space-structure}) to incorporate \textit{access counters} and a \textit{suspended completion code}.

Firstly, we equip every variable with an \textit{access counter} $(w, \rw, r) \in \mathit{LMin}^3$ where $w$ is the numbers of \texttt{write}, $\rw$ of \texttt{readwrite} and $r$ of \texttt{read} accesses that can still happen on a variable in the current instant.
As suggested by the lattice $\mathit{LMin}$, these counters are decreased whenever the corresponding access is performed.
We extend the poset $\mathit{Var}$ to access counters: $\{\top\} \cup (\mathit{spacetime} \times \mathit{Value} \times \mathit{LMin}^3)$.

Secondly, given a variable $x$ and its access counter $(w, \rw, r)$, we say that a process is \textit{suspended} if it needs to perform a \texttt{readwrite} on $x$ when $w > 0$, or to read $x$ when $w > 0$ or $\rw > 0$.
A process cannot be suspended on a write access.
Whenever a process is stuck, the flow of control is given to another process.
We add this additional stuck status in the set of completion codes $\mathit{Compl}$ with the code $3$.


In order to schedule processes, the runtime performs a \textit{can} and \textit{cannot} analyses over the program.
The \textit{can analysis} computes an upper bound on the counters: the numbers of accesses that can still happen on each variable in the current instant.
The \textit{cannot analysis} decreases counters by invalidating parts of the program that cannot be executed.

Consider the following spacetime program ($x,y \in LMax$):
\begin{lstlisting}[language=bonsai]
when x |= y then f(write x, read y) else g(read x, write y) end
\end{lstlisting}
\noindent
Initially, the counters of $x$ and $y$ are both set to $(1,0,1)$.
Therefore, we cannot decide the entailment of $x \vDash y$ because its result might change due to future writes on $x$ or $y$.
However, we observe that if $x \vDash y$ holds then we can only write on $x$, which cannot change the entailment result.
Similarly if $x \nvDash y$ holds, we can only write on $y$.
To unlock such a situation, the \textit{cannot analysis} decreases the counters of the variables with unreachable read/write accesses.
Thanks to the causality analysis, a deadlock situation cannot happen since every access in every path is well-ordered.



The algorithm scheduling an instant alternates between the execution of the process, and the decrement of access counters with the $cannot$ analysis.\footnote{A sketch of this algorithm is available in Appendix~\ref{scheduling-algorithm}.}
The mechanics of this scheduling algorithm is close to the one of \textsf{SugarCubes}~\cite{boussinot-java-2000} and \textsf{ReactiveML}~\cite{mandel-reactiveml:-2005}.

\subsection{Experiments}
\label{experiments}
\begin{table}
\begin{tabular}{c|c c c}
Problem & Spacetime & \textsf{Choco} & Factor \\
\hline
13-Queens & 16.4s (62946n/s) & 5.3s (194304n/s) & 3.1 \\
14-Queens & 89.9s (62020n/s) & 30.6s (182218n/s) & 2.9 \\
15-Queens & 528.2s (60972n/s) & 185.2s (173816n/s) & 2.85 \\
Golomb Ruler 10 & 1.8s (17407n/s) & 1s (31154n/s) & 1.8 \\
Golomb Ruler 11 & 40.1s (14186n/s) & 27.2s (20888n/s) & 1.47 \\
Golomb Ruler 12 & 425.8s (10871n/s) & 279.8s (16541n/s) & 1.52 \\
Latin Square 60 & 19s (155n/s) & 17.1s (172n/s) & 1.10 \\
Latin Square 75 & 61.2s (73n/s) & 57.9s (77n/s) & 1.06 \\
Latin Square 90 & 150.3s (44n/s) & 147.8s (45n/s) & 1.02 \\
\end{tabular}
\caption{Comparison of spacetime and \textsf{Choco} on the resolution time and nodes-per-second (n/s).}
\label{experiments-table}
\vspace{-0.6cm}
\end{table}

We terminate this section with a short experimental evaluation.
The experiments were run on a 1.8GHz Intel(R) Core(TM) i7-8550U processor running GNU/Linux.
A warm-up time of about $30s$ was performed on the JVM before any measure was recorded.

We select three CSPs to test the overhead of a spacetime strategy in comparison to the same hard-coded \textsf{Choco} strategy.
The propagation engine is the one of \textsf{Choco} in both cases.
As shown in Table~\ref{experiments-table}, the overhead factor of spacetime varies from almost $1$ to at most $3.1$ depending on the problem to solve.
To obtain a \textit{search intensive} algorithm, we search for all solutions of the N-Queens problem which has only three constraints to propagate in each node.
This is the worst-case scenario for spacetime since the number of nodes is directly linked to the number of reactions of a spacetime program, and thus its overhead factor.
We also consider a \textit{propagation intensive} algorithm by searching for a single solution of a Latin Square problem which has a large number of constraints.
To find a solution, the search never backtracks so the number of nodes is few.
This explains the small overhead factor of spacetime which is almost $1$.
Finally, we evaluate a branch and bound (BAB) search strategy on the Golomb Ruler problem.
BAB finds the best solution of an optimization problem, and thus explores a large tree.
In this case, the overhead factor of spacetime drops to $1.5$ thanks to the more realistic balance between search and propagation.

As for the correctness, spacetime always finds the same number of nodes, solutions and failures than \textsf{Choco}, as well as the same lower bounds for optimization problems (Golomb ruler).
It indicates that the exact same search tree is explored.


\section{Related Work}
\label{related-work}

We review two families of search languages: constraint logic programming and combinator-based search languages.
Afterwards, we discuss the independent issue of integrating arbitrary data into imperative synchronous languages.

\subsection{Constraint Logic Programming}

Constraint logic programming (CLP)~\cite{jaffar-constraint-1987} is a paradigm extending logic programming with constraints.
We can program search strategies by using the backtracking capabilities of logic programming.
CLP systems such as \textsf{GNU-Prolog}~\cite{codognet-compiling-1996,diaz-implementation-2012} and \textsf{Eclipse}~\cite{apt-constraint-2007,schimpf-eclipse-2012} propose various built-in blocks to construct a customized search strategy.
Although CLP is an elegant formalism, it suffers from three drawbacks:
\begin{enumerate}
\item There is no mechanism to compose search strategies.
\item Global state, such as a node counter, is programmed via system dependent non-backtrackable mutable state libraries.
\item It is bound to the evaluation strategy of Prolog, which for example means that LDS with highest-occurrence discrepancies cannot be easily implemented.\footnote{See the documentation of \textsf{Eclipse} at \url{http://eclipseclp.org/doc/bips/lib/fd_search/search-6.html}.}
\end{enumerate}

The \texttt{tor/2} predicate~\cite{Schrijvers:2014:TMS:2608851.2608962} tackles the compositionality issue of CLP systems.
It proposes to replace the disjunctive Prolog predicate \texttt{;/2} by a \texttt{tor/2} predicate which, in addition to creating two branches in the search tree, is a synchronization point.
Two search strategies defined with \texttt{tor/2} can be merged with the predicate \texttt{tor_merge/2}.
This extension allows the user to program various strategies independently and to assemble them.
However, the search predicates are not executed concurrently, thus two search strategies cannot be interleaved and communicate over a shared variable.
For example, the processes \texttt{Solver.search} and \texttt{MinimizeBAB.solve} must be interleaved because they communicate over the variables \texttt{consistent} and \texttt{domains}.

\subsection{Search Combinators}


Early constraint search languages appeared around 1998 with \textsf{Localizer}~\cite{michel-localizer:-1999}, \textsf{Salsa}~\cite{salsa-2002} and \textsf{OPL}~\cite{van-hentenryck-opl-1999,van-hentenryck-search-2000}.
More recent approaches include \textsf{Comet}~\cite{VanHentenryck:2005:CLS:1121598} (successor of \textsf{Localizer}), the search combinators~\cite{search2013} and its subset \textsf{MiniSearch}~\cite{rendl-minisearch:2015}.
\textsf{Comet} and \textsf{Localizer} are specialized to local search, a non-exhaustive form of constraint solving.
Local search languages differ because their programs are not necessarily extensive and are not always based on backtracking search.
Search combinators mostly focus on the control part of search and it is interesting to take an example (from~\cite{search2013}):
\begin{displaymath}
\begin{array}{l}
id(s) \eqdef ir(depth, 0, +, 1, \infty, s)\\
ir(p, l, \oplus, i, u, s) \eqdef let(n, l, restart(n \leq u, \\
  \qquad\quad and([assign (n, n \oplus i), limit (p \leq n, s)])))
\end{array}
\end{displaymath}
\noindent
The combinator $id$ is an iterative depth-first search (IDS)~\cite{Korf85depth-firstiterative-deepening} that restarts a strategy $s$ by increasing the depth limit.
The pattern of iteratively restarting the search is encapsulated in a combinator $ir$ where the strategy $s$ is restarted until we reach a limit $n \leq u$.
To summarize, $n$ is an internal counter initialized at $l$, and increased by $n \oplus i$ on each restart.
They show that LDS is just another case of the combinator $ir$ with discrepancies.

In search combinators, the search strategy is written \textit{vertically}: each strategy is encapsulated in another strategy.
In spacetime, we compose search strategy \textit{horizontally}: each strategy is executed concurrently (``next to'') another strategy.
We believe that both vertical and horizontal compositionality is required in order to achieve high re-usability of search strategies.

A drawback of combinators-based languages is that they rely on data from the constraint solver, and the interactions with the host language are not formalized.
In particular, it is not possible that two search strategies safely communicate over shared variables.

\subsection{Arbitrary Data in Synchronous Languages}

Signals in \textsf{Esterel} are Boolean values, which are limited when processes need to communicate more complex information.
This is why they bring the notions of valued signals and variables for storing non-Boolean values~\cite{berry-esterel-2000,esterel-technologies-esterel-2005}.
However, they are more restricted than pure signal: testing the value of a signal is only possible when all emissions have been performed, and variables must not be shared for writing across processes.
Sequentially constructive \textsf{Esterel} (SCEst)~\cite{smyth-scest:-2018} brings variables to \textsf{Esterel} that can be used across processes.
The main idea is that any value must be manipulated following an \textit{init-update-read} cycle within an instant.
This is similar to our way to schedule \textit{write-readwrite-read}, but there is no notion of order between values in SCEst.
Therefore we can use destructive assignment similarly to sequential languages.
In spacetime, the choice of lattices as the underlying data model comes from CCP and is more suited for constraint programming.
In this respect, lattice-based variables unify the notions of signals, valued signals and variables of \textsf{Esterel}.

\textsf{ReactiveML} merges the imperative synchronous and functional paradigms without negative ask~\cite{mandel-reactiveml:-2005}.
An advantage is that we can manipulate arbitrary functional data.
Note that the addition of mutable states to \textsf{ReactiveML} is not deterministic~\cite{mandel-conception-2006}.

Default TCC~\cite{saraswat-timed-1999} is TCC with negative ask.
It views an instant as a set of closure operators, one for each assumption on the result of the ask statements.
A weakness of default TCC is to speculate on the result of the negative asks, which is implemented by backtracking inside an instant if its guess was wrong~\cite{saraswat-jcc:-2003}.
This is also problematic for external functions that produce side-effects.

\section{Conclusion}

Concurrent constraint programming (CCP) is a theoretical paradigm that formalizes concurrent logic programming inspired by constraint logic programming~\cite{ueda-logic/constraint-2017}.
Unfortunately, this marriage is incomplete since backtracking, available in constraint logic programming, is not incorporated in CCP.
We believe that the missing piece is the notion of logical time, as it appears in the synchronous paradigm, and it fostered the development of spacetime.

In the first part of this paper, we argued that logical time is a suitable device to conciliate concurrency and backtracking.
The main underlying idea is captured as follows: a search strategy explores one node of the search tree per logical instant.
In particular, we took the example of constraint solving in which designing search strategies is crucial to solve a CSP efficiently.
We developed several search strategies in a modular way, and showed that they can be composed to obtain a new one.
As a result, spacetime improves on the compositionality issues faced by developers of search strategies.

In the second part of this paper, we developed the foundations of spacetime by extending the behavioral semantics of \textsf{Esterel} to lattice-based variables and backtracking.
We proved that the semantics is deterministic, reactive and that a spacetime program only accumulates more and more information during and across instants (extensiveness).

Further developments of spacetime include static compilation such as in Esterel~\cite{esterel-compilation} to improve efficiency, development in a proof assistant of the reactivity, determinism and extensiveness proofs, and formalization of a precise connection between the operational semantics (runtime) and the behavioral semantics.
Furthermore, a natural extension of spacetime is to reify the queue inside the language itself instead of relying on the host language.
The key idea of this extension is to merge the time hierarchy of synchronous languages~\cite{gemunde-clock-2013,mandel-time-2015} and the space hierarchy induced by deep guards in logic programming~\cite{jaha91} and Oz computation spaces~\cite{oz-space-schulte}.
First-class queue will allow users to program restart-based search strategies directly in spacetime instead of partly relying on the host language.
Preliminary extension of the compiler indicates that this approach is feasible.
Finally, although we applied spacetime to constraint programming, the notion of constraints is not built-in since we rely on lattice abstractions.
Therefore, we firmly believe that spacetime is suitable to express strategies in other fields tackling combinatorial exploration such as in satisfiability modulo theories (SMT), model checking and rewriting systems.

\begin{acks}
I would like to thank the anonymous reviewers for their useful comments.
I am indebted to Carlos Agon who gave me the freedom to pursue this research and for fruitful discussions about the spacetime paradigm.
My special thanks go to Cl\'{e}ment Poncelet, David Cachera, Charlotte Truchet and Yinghan Ling for their helpful comments and suggestions on this paper.
This work was partially supported by \grantsponsor{}{Agence Nationale de la Recherche}{} under Grant No.~\grantnum{}{15-CE25-0002} (Coverif).
A substantial part of this work was done at the Institute for Research and Coordination in Acoustics/Music (IRCAM) and Sorbonne University, Paris.
\end{acks}



\appendix
\section{Appendix}



\subsection{Proof of Reactivity}
\label{proof-reactivity}

\begin{lemma}
The semantics of spacetime is reactive.
\end{lemma}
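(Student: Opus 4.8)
The plan is to prove reactivity as a \emph{completeness} property of the rule systems in Figures~\ref{spacetime-semantics-fig} and~\ref{react-rules}: for every causal configuration there is at least one finite proof tree. Since the reaction rules \textsc{react} and \textsc{exit-react} of Figure~\ref{react-rules} merely chain single-instant behavioral transitions and are jointly exhaustive on the termination condition ($U'^k = 1$ together with the emptiness of $Q'$), it suffices to establish reactivity of one instant, i.e. that the behavioral relation $Q, \mathcal{L}_i \vdash p \xrightarrow[H_i]{U'} p'$ admits a derivation. First I would reduce this to a fixpoint problem: because the semantics is of the guess-and-check kind, every subderivation reading the same input/output universe $U = I \sqcup O$, a valid instant is exactly a guessed output $O$ whose derived output $U'$ equals $O$.

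The core difficulty — and the reason the side condition $\mathit{causal}(p)$ is required — is that rule \textsc{when-false} promotes $\mathit{unknown}$ to $\mathit{false}$ (reaction to absence), so the map from a guessed output to the derived output is not monotone: enlarging the store can turn an $\mathit{unknown}$ entailment into $\mathit{true}$ and flip the branch taken. I would neutralise this with a stability lemma obtained from condition~(\ref{read-after-entailment}). Recall that a \textsc{when} on $x \vDash y$ contributes the oriented atom $x \vDash y$ to its then-path and the dual atom $y \vDash x$ to its else-path. On the then-path, condition~(\ref{read-after-entailment}) forbids writing $y$ afterwards, so $y \leq x$ cannot be invalidated as the store grows (writing on $x$ only reinforces it); on the else-path it forbids writing $x$, so $y \leq x$ cannot become satisfied. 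In both cases the branch chosen against the partial store at the test point agrees with the branch chosen against the final store. Symmetrically, condition~(\ref{read-after-read}) guarantees that every argument read by a host call in \textsc{hcall} already carries its final value, so the result of the call is insensitive to evaluation order.

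Using these two facts I would construct an explicit candidate output $O^{\ast}$ by executing the instant along each causal path in any order compatible with the causality constraints: starting from the bottom store $\bot$ (and, for $\downarrow$ variables, the values popped from $Q$ by \textsc{resume-var-decl}$\rightarrow\downarrow$), evaluate each entailment and each host call against the store accumulated so far, joining writes with $\sqcup$. The \texttt{space} bodies are scheduled at the end of the instant against $O^{\ast}$, which is legitimate because the well-formedness restriction on \textsc{space} makes each body terminate instantaneously and write only \texttt{world\_line} variables, so it is itself reactive by the induction hypothesis on a strictly smaller process. I would then verify, by structural induction on $p$, that feeding $U = I \sqcup O^{\ast}$ back into the rules yields a derivation whose output is exactly $O^{\ast}$: the base axioms \textsc{nothing}, \textsc{pause}, \textsc{stop}, \textsc{prune} and \textsc{hcall} apply directly (the last relying on the assumed reactivity of host functions), while the compound rules \textsc{enter-seq}/\textsc{next-seq}, \textsc{par}$^{\lor}$, \textsc{par}$^{\land}$/\textsc{exit-par}$^{\land}$, \textsc{loop} and the declaration rules reduce to subderivations on structurally smaller continuations supplied by the hypothesis, with the stability lemma discharging the two \textsc{when} cases.

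The step I expect to be the main obstacle is making the phrase ``execute the instant along each causal path'' rigorous while the control flow itself depends on the store: the set of reachable paths is governed by the \textsc{when} branches, which are governed by entailments, which are governed by the very store being built. I would break this circularity by induction on the length of the longest causal path — finite because the guard $U^k \neq 0$ in \textsc{loop} forbids instantaneous loop bodies, so each instant unfolds finitely — showing that the prefix of atoms evaluated so far already fixes every entailment result that decides the remaining control flow; condition~(\ref{read-after-entailment}) is precisely what makes this prefix-determinacy hold. The residual bookkeeping — that the branch algebra operators $\circ$, $\lor$, $\land$ are total on the sequences produced, and that \textsc{resume-var-decl}$\rightarrow\downarrow$ always finds its location in $Q$ — is routine given Lemma~\ref{obs-spacetime} and Definition~\ref{queueing-property}.
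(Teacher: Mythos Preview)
Your proposal is correct and considerably more ambitious than the paper's own argument. The paper's proof in Appendix~\ref{proof-reactivity} is a bare rule-completeness check: it walks through every syntactic form of $p$ and observes that the side conditions of the applicable rules are jointly exhaustive (\textsc{enter-seq}/\textsc{next-seq} split on $U'^k = 0$, \textsc{when-true}/\textsc{when-false} cover all three entailment outcomes, \textsc{space}/\textsc{space-pruned} split on the shape of $U^B$, \textsc{react}/\textsc{exit-react} split on the termination predicate, and so on). For the only rules where consistency with the guessed $U$ is at stake, namely \textsc{when} and \textsc{hcall}, the paper discharges the obligation in one line each by invoking properties~(\ref{read-after-entailment}) and~(\ref{read-after-read}) of the causality analysis, without spelling out a fixpoint construction.

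What you add is precisely the explicit construction of a coherent output $O^{\ast}$ and the stability lemma that justifies evaluating each entailment against a partial store rather than against the full input/output universe. This is the argument one would expect in an Esterel-style logical semantics, and it makes rigorous what the paper leaves implicit in its remark that ``the entailment result cannot further change during the derivation''. Your treatment of the else-path (atom $y \vDash x$, hence no later write on $x$, hence $y \leq x$ cannot become true) and of \textsc{hcall} (all read arguments already carry their final value) matches the paper's intent exactly but supplies the missing glue. The induction on the length of the longest causal path is a reasonable termination measure; the paper simply appeals to the compile-time rejection of instantaneous loops. In short, your route and the paper's agree on the ingredients (rule exhaustiveness plus causality), but you assemble them into an actual fixpoint-existence argument whereas the paper stops at the case analysis.
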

\begin{proof}
Given a program $p$, we can always choose a rule to apply, this is checked by verifying the completeness of the side conditions on rules applying to a same program.
\begin{itemize}
  \item Axioms \textsc{nothing}, \textsc{pause}, \textsc{stop} and \textsc{prune} are always reactive because they do not have side conditions.
  \item Axioms \textsc{space} and \textsc{space-pruned} derives both pruned and non-pruned branch.
  In \textsc{space}, enforcing instantaneousness of the body and forbidding writes in \texttt{single_space} or \texttt{single_time} variables can be statically checked at compile-time.
  \item \texttt{loop} is reactive if the loop is not instantaneous, this can be statically checked at compile-time.
  \item \textsc{when-true} and \textsc{when-false} are reactive since the entailment operation only maps to $true$, $false$ or $unknown$ (last both are handled in \textsc{when-false}).
  Moreover, due to the causality analysis (property~\ref{read-after-entailment}), the entailment result cannot further change during the derivation.
  \item Given \texttt{$p$;$q$}, \textsc{enter-seq} and \textsc{next-seq} are complete on the completion code of $p$: $U'^k = 0 \lor \lnot (U'^k = 0)$ is a tautology.
  \item \textsc{par$^\lor$} is always applicable.
  \item Given \texttt{$p$ || $q$}, \textsc{par$^\land$} and \textsc{exit-par$^\land$} are complete on the completion code of $p$ and $q$.
  We have $(U'^k \neq 0 \land U''^k \neq 0) \lor (U'^k = 0 \lor U''^k = 0)$ a tautology.
  \item \textsc{var-decl$\circlearrowleft$} is always applicable.
  \item \textsc{start-var-decl$\rightarrow\downarrow$} and \textsc{resume-var-decl$\rightarrow\downarrow$} are complete (either we have a location or a variable name).
  For \textsc{resume-var-decl$\rightarrow\downarrow$}, the function $pop$ returns $\bot$ if the queue is empty, any variable not defined in a space is mapped to $\bot$ as well (cf. Section~\ref{space-structure}), so the initialization of a \texttt{world_line} variable is reactive.
  \item \textsc{hcall} depends on the semantics of the host language.
  The causality analysis guarantees that the function is only called if all its variables can be safely accessed:
  \begin{itemize}
    \item A \texttt{write} access is always possible.
    \item For \texttt{read} access, we ensure this variable cannot be written anymore in the future (by property~\ref{read-after-read}).
    \item For \texttt{readwrite} access, only one of such access can happen in an instant (by property~\ref{read-after-read}), and it must happen after every write on this variable.
  \end{itemize}
  \item \textsc{react} and \textsc{exit-react} are complete on the termination condition.
  We have $(U'^k = 1 \text{ and $Q'$ is not empty}) \lor (U'^k \neq 1 \text{ or $Q'$ is empty})$ a tautology.
\end{itemize}
\end{proof}

\renewcommand{\algorithmicrequire}{\textbf{Input:}}
\renewcommand{\algorithmicensure}{\textbf{Output:}}
\begin{algorithm*}
\caption{Runtime engine}
\label{runtime-engine}
\begin{algorithmic}[1]
\Require A spacetime program $p$, a space $S \in Space$ and a queue $Q$.
\Ensure The triple $\langle p, S, Q \rangle$ such that either $p$ is stopped or terminated, or $Q$ is empty.
\Procedure{execute}{$p, S, Q$}
\State $k \leftarrow 1$ \Comment{Completion code initialized to pause.}
\If{First instant}
  \State $Q \leftarrow push(Q, \{\bot\})$ \Comment{Bootstrap the queue with a single element.}
\EndIf
  \While{$k = 1 \land Q$ is not empty}
    \State $\langle Q, S^{\downarrow}\rangle \leftarrow pop(Q)$
    \State $S \leftarrow can(p,S)$ \Comment{We compute an upper bound on the access counters.}
    \State $\langle p, S, B, k\rangle \leftarrow executeInstant(p, S)$
    \State $Q \leftarrow push(Q, B)$
  \EndWhile
\State \textbf{return} $(p, S, Q)$
\EndProcedure
\end{algorithmic}
\end{algorithm*}

\renewcommand{\algorithmicrequire}{\textbf{Input:}}
\renewcommand{\algorithmicensure}{\textbf{Output:}}
\begin{algorithm*}
\caption{Runtime execution of one instant}
\label{exec-instant}
\begin{algorithmic}[1]
\Require A spacetime program $p$ and a space $S \in Space$.
\Ensure The tuple $\langle p, S, B, k \rangle$ such that $B$ is the set of branches and $k$ the completion code.
\Procedure{executeInstant}{$p, S$}
\State $k \leftarrow 3$ \Comment{Completion code initialized to stuck.}
  \While{$k = 3$}
    \State $\langle p, S, B, k\rangle \leftarrow executeProcess(p, S)$
    \If{$k = 3$}
      \State $\langle p, S \rangle \leftarrow cannot(p, S)$ \Comment{We decrease the upper bound on the access counters}
    \EndIf
  \EndWhile
\State \textbf{return} $(p, S, B, k)$
\EndProcedure
\end{algorithmic}
\end{algorithm*}

\subsection{Proof of Determinism}
\label{proof-determinism}

\begin{lemma}
The semantics of spacetime is deterministic.
\end{lemma}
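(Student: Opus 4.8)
The plan is to prove determinism dually to reactivity: where the reactivity proof established that the side conditions of the rules sharing a given program form are jointly \emph{complete}, here I would verify that they are pairwise \emph{disjoint}, so that at each node of a candidate proof tree at most one rule is applicable. I would proceed by structural induction on the process $p$ for a single behavioral transition, and by induction on the instant counter $i$ for the reaction relation, invoking the single-instant result at each step.

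For the base cases, the axioms \textsc{nothing}, \textsc{pause}, \textsc{stop} and \textsc{prune} each match a distinct atomic form and carry no side condition, so their output universe is fixed syntactically. The rule \textsc{hcall} is the only rule for a function call, and its output is uniquely determined provided the host transition $e \xrightarrowdbl[H]{H'} v$ is deterministic, which is exactly the hypothesis on host functions. For the compound forms I would exhibit, for each program shape, the competing rules and show their guards partition the relevant condition: \textsc{when-true} versus \textsc{when-false} split on the entailment verdict $U^V(\ell_1)\vDash U^V(\ell_2)$ being $\mathit{true}$ as opposed to $\mathit{false}\lor\mathit{unknown}$; \textsc{enter-seq} versus \textsc{next-seq} split on $U'^k\neq 0$ versus $U'^k=0$; \textsc{par$^\land$} versus \textsc{exit-par$^\land$} split on $U'^k\neq 0\land U''^k\neq 0$ versus its negation; \textsc{space} versus \textsc{space-pruned} split on whether $U^B=\langle\texttt{space }W\rangle$; and the three declaration rules are separated by the spacetime annotation ($\circlearrowleft$ versus $\rightarrow,\downarrow$) and by whether the declared entity is still a name $x\in\nameset$ (hence \textsc{start-var-decl}) or already a location $\ell\in\locset$ (hence \textsc{resume-var-decl}). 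Since \textsc{loop} and \textsc{par$^\lor$} are the unique rules for their forms, no disjointness check is needed there. In every case the premises refer only to sub-derivations of $p$ and to host computations; by the induction hypothesis the former admit at most one proof tree and by assumption the latter are deterministic, so the output $U'$ and the residual $p'$ are fixed once the rule is chosen.

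For the reaction relation, \textsc{react} and \textsc{exit-react} are mutually exclusive on the guard ``$U'^k=1$ and $Q'$ nonempty'' versus ``$U'^k\neq 1$ or $Q'$ empty'', and $Q'=\mathit{push}(Q,U'^B)$ is a function of the single-instant output, so the recursive call in \textsc{react} is applied to a uniquely determined state and the induction on $i$ closes. The delicate point I expect is not the rule bookkeeping but the self-referential nature of the behavioral semantics: the universe $U=I\sqcup O$ appears simultaneously as the input consulted by \textsc{when} and \textsc{hcall} and as the object the derived $U'$ must match, so determinism really amounts to showing that at most one guess $O$ admits a derivation with $U'=U$. This is where the causality analysis does the work: conditions (\ref{read-after-entailment}) and (\ref{read-after-read}) guarantee that along every path no value read by an entailment or by a \texttt{read}/\texttt{readwrite} access can be enlarged afterwards, so the verdict of each \texttt{when} and the arguments of each host call are stable under the accumulation of further information in $U$. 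Consequently the map from a guessed $O$ to the derived $O'$ cannot support two distinct consistent fixpoints, and the proof tree, if one exists, is unique. I would make this precise by an argument paralleling Esterel's constructiveness, lifting per-path stability to the whole symbolic execution computed by the causality analysis.
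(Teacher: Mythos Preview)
Your core argument---the rule-by-rule disjointness case analysis---matches the paper's proof exactly: the paper simply lists each program form, names the competing rules, and observes that their guards are mutually exclusive (on the entailment verdict, on $U'^k$, on the branch shape, on $\nameset$ versus $\locset$, and on the termination condition for \textsc{react}/\textsc{exit-react}), deferring \textsc{hcall} to the host. Your enumeration is the same, so on that front there is nothing to add.

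Where you diverge is the final paragraph. The paper's proof stops at rule disjointness for a \emph{fixed} input/output universe $U$; it does not address the point you raise, namely that the behavioral semantics is a guess-and-check scheme and determinism should also rule out two distinct guesses $O$ each producing a consistent derivation with $U'=U$. Your invocation of the causality conditions (\ref{read-after-entailment}) and (\ref{read-after-read}) to argue stability of entailment verdicts and host-call arguments under further joins is a genuine strengthening over what the paper writes, and it is the right instinct---this is precisely the constructiveness-style argument one expects from the Esterel lineage. The paper implicitly relies on the blanket assumption ``we now only consider causal spacetime programs'' and on the reader's familiarity with Esterel's logical correctness, but never spells out the fixpoint-uniqueness step. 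So your proposal is not merely correct; it is more complete than the published proof on this point.
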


\begin{proof}
We check that for every rule, at most one rule can be applied to any process $p$, this is checked by verifying that rules on a same statement are exclusive to each other.
\begin{itemize}
  \item Rules \textsc{nothing}, \textsc{pause}, \textsc{stop}, \textsc{prune}, \textsc{loop}, \textsc{var-decl$\circlearrowleft$} and \textsc{par$^\lor$} are deterministic because only one rule can apply.
  \item Axioms \textsc{space} and \textsc{space-pruned} are exclusive on the kind of branch, so it is deterministic.
  \item \textsc{when-true} and \textsc{when-false} are deterministic since the side conditions on the entailment are exclusive.
  \item Given \texttt{$p$;$q$}, \textsc{enter-seq} and \textsc{next-seq} are exclusive on the completion code of $p$.
  \item Given \texttt{$p$ || $q$}, \textsc{par$^\land$} and \textsc{exit-par$^\land$} are exclusive on the completion code of $p$ and $q$.
  \item Due to the disjointness of the sets $\nameset$ and $Loc$, we can only apply either \textsc{start-var-decl$\rightarrow\downarrow$} or \textsc{resume-var-decl$\rightarrow\downarrow$}.
  \item \textsc{hcall} is deterministic if the semantics of the host language is deterministic.
  \item \textsc{react} and \textsc{exit-react} are exclusive on the termination condition.
\end{itemize}
\end{proof}






\subsection{Scheduling Algorithm}
\label{scheduling-algorithm}

We divide the runtime algorithm into two parts: the execution of several instants in Algorithm~\ref{runtime-engine} and the execution of an instant in Algorithm~\ref{exec-instant}.

The first algorithm implements the rules \textsc{react} and \textsc{exit-react} of the behavioral semantics.
In addition, it initializes the access counters before each instant with the $can$ function.

The second algorithm is the scheduler of the processes inside an instant.
It alternates between $executeProcess$ and $cannot$ until the process is not suspended anymore.
Consequently, this function never returns a suspended completion code.
The function $executeProcess$ is implemented following the same mechanics than \textsf{SugarCubes}~\cite{boussinot-java-2000} and some ideas from \textsf{ReactiveML}~\cite{mandel-reactiveml:-2005}.

\end{document}